\documentclass[USenglish]{article}
 
\usepackage{microtype}
\usepackage[utf8]{inputenc}
\usepackage{amssymb,amsmath,amsthm,mathtools}
\usepackage{sidecap}
\usepackage[sort&compress,sectionbib]{natbib}
\usepackage{tikz}
\usetikzlibrary{calc}
\usetikzlibrary{decorations.markings,decorations.pathmorphing}
\usetikzlibrary{shapes}
\usepackage{fullpage}
\usepackage{hyperref}
\usepackage{booktabs}
\usepackage{enumitem}
\usepackage{comment}
\usepackage{color}
\usepackage{boxedminipage}
\usepackage{caption,subcaption}
\usepackage{algorithm}
\usepackage{algpseudocode}
\usepackage{xspace}
\usepackage{wrapfig}
\usepackage{graphicx}
\usepackage[disable]{todonotes}
\usepackage{enumitem}

\DeclareMathOperator{\opt}{OPT}
\newcommand\prob[1]{\mathbf{Pr}\left[{#1}\right]}
\newcommand\expct[1]{\mathbf{E}\left[{#1}\right]}

\frenchspacing
 
\sloppy

\date{}
\title{New Algorithms for Maximum Disjoint Paths Based on Tree-Likeness}

\author{Krzysztof Fleszar\thanks{Universit{\"a}t W{\"u}rzburg, W{\"u}rzburg, Germany. \texttt{krzysztof.fleszar@uni-wuerzburg.de}}
        \and Matthias Mnich\thanks{Universit{\"a}t Bonn, Bonn, Germany. \texttt{mmnich@uni-bonn.de}. Supported by ERC Starting Grant 306465 (BeyondWorstCase).} \and Joachim Spoerhase\thanks{Universit{\"a}t W{\"u}rzburg, W{\"u}rzburg, Germany. \texttt{joachim.spoerhase@uni-wuerzburg.de}}
}

\newtheorem{theorem}{Theorem}
\newtheorem{lemma}{Lemma}
\newtheorem{definition}{Definition}
\newtheorem{proposition}{Proposition}
\newtheorem{conjecture}{Conjecture}

\newtheorem{observation}{Observation}

\newcommand{\poly}{\mathrm{poly}}

\newcommand{\blocked}{\textrm{\textit{blocked}}}
\newcommand{\free}{\textrm{\textit{free}}}
\newcommand{\toBeUsed}{\textrm{\textit{to-be-used}}}

\newcommand{\ch}[1]{c_{#1}}

\begin{document}

\maketitle

\begin{abstract}
  We study the classical $\mathsf{NP}$-hard problems of finding maximum-size subsets from given sets of~$k$ terminal pairs that can be routed via edge-disjoint paths ({\sc MaxEDP}) or node-disjoint paths ({\sc MaxNDP}) in a given graph.
  The approximability of {\sc MaxEDP/NDP} is currently not well understood; the best known lower bound is $\Omega(\log^{1/2 - \varepsilon}{n})$, assuming $\mathsf{NP}\not\subseteq\mathsf{ZPTIME}(n^{\poly \log n})$.
  This constitutes a significant gap to the best known approximation upper bound of~$\mathcal O(\sqrt{n})$ due to Chekuri et al. (2006) and closing this gap is currently one of the big open problems in approximation algorithms.  
  In their seminal paper, Raghavan and Thompson (Combinatorica, 1987) introduce the technique of randomized rounding for LPs; their technique gives an $\mathcal O(1)$-approximation when edges (or nodes) may be used by $\mathcal O\left(\frac{\log n}{\log\log n}\right)$ paths.

  \quad In this paper, we strengthen the above fundamental results.  We provide new bounds formulated in terms of the \emph{feedback vertex set number} $r$ of a graph, which measures its vertex deletion distance to a forest.  In particular, we obtain the following.
  \begin{itemize}
    \item For {\sc MaxEDP}, we give an $\mathcal{O}(\sqrt{r}\cdot \log^{1.5}{kr})$-approximation algorithm.
  As $r\leq n$, up to logarithmic factors, our result strengthens the best known ratio~$\mathcal O(\sqrt{n})$ due to Chekuri et al.    
\item Further, we show how to route $\Omega(\opt)$ pairs with congestion $\mathcal O\left(\frac{\log{kr}}{\log\log{kr}}\right)$, strengthening the bound obtained by the classic approach of Raghavan and Thompson.
   \item For {\sc MaxNDP}, we give an algorithm that gives the optimal answer in time~$(k+r)^{\mathcal O(r)}\cdot n$.
     If~$r$ is at most triple-exponential in~$k$, this improves the best known algorithm for {\sc MaxNDP} with parameter~$k$, by Kawarabayashi and Wollan (STOC 2010).
   \end{itemize}

  We complement these positive results by proving that {\sc MaxEDP} is $\mathsf{NP}$-hard even for $r=1$, and {\sc MaxNDP} is $\mathsf{W}[1]$-hard for parameter $r$. This shows that neither problem is fixed-parameter tractable in~$r$ unless $\mathsf{FPT} = \mathsf{W}[1]$ and that our approximability results are relevant even for very small constant values of $r$.
\end{abstract}

\section{Introduction}
\label{sec:introduction}
In this paper, we study disjoint paths routing problems.
In this setting, we are given an undirected graph $G$ and a collection of source-destination pairs $\mathcal M = \{(s_1, t_1), \hdots, (s_k, t_k)\}$.
The goal is to select a maximum-sized subset $\mathcal M' \subseteq \mathcal M$ of the pairs that can be \emph{routed}, where a routing of $\mathcal M'$ is a collection $\mathcal P$ of paths such that, for each pair $(s_i, t_i) \in \mathcal M'$, there is a path in $\mathcal P$ connecting $s_i$ to~$t_i$.
In the {\sc Maximum Edge Disjoint Paths (MaxEDP)} problem, a routing~$\mathcal P$ is \emph{feasible} if its paths are pairwise edge-disjoint, and in the {\sc Maximum Node Disjoint Paths (MaxNDP)} problem the paths in $\mathcal P$ must be pairwise vertex-disjoint.

Disjoint paths problems are fundamental problems with a long history and significant connections to optimization and structural graph theory.
The decision version of {\sc MaxEDP}/{\sc MaxNPD} asks whether all of the pairs can be routed.
Karp~\cite{Karp1975} showed that, when the number of pairs is part of the input, the decision problem is $\mathsf{NP}$-complete.
In undirected graphs, {\sc MaxEDP} and {\sc MaxNDP} are solvable in polynomial time when the number of pairs is a fixed constant; this is a very deep result of Robertson and Seymour~\cite{RobertsonSeymour1995} that builds on several fundamental results in structural graph theory from their graph minors project.

In this paper, we consider the optimization problems {\sc MaxEDP} and {\sc MaxNDP} when the number of pairs are part of the input. 
In this setting, the best approximation ratio for {\sc MaxEDP} is achieved by an~$\mathcal O(\sqrt{n})$-approximation~algorithm~\cite{ChekuriEtAl2006,KolliopoulosS04}, 
where $n$ is the number of nodes, whereas the best hardness for undirected graphs is only $\Omega(\log^{1/2 - \varepsilon}{n})$ \cite{andrews2010inapproximability}.
Bridging this gap is a fundamental open problem that seems quite challenging at the moment.

Most of the results for routing on disjoint paths use a natural multi-commodity flow relaxation as a starting point.
A well-known integrality gap instance due to Garg et al.~\cite{GargEtAl1997} shows that this relaxation has an integrality gap of $\Omega(\sqrt{n})$, and this is the main obstacle for improving the $\mathcal O(\sqrt{n})$-approximation~ratio in general graphs.
The integrality instance on an $n \times n$ grid (of treewidth $\Theta(\sqrt{n})$) exploits a topological obstruction in the plane that prevents a large integral routing; see Fig.~\ref{fig:lp}.
This led Chekuri et al. \cite{cns-tw} to studying the approximability of {\sc MaxEDP} with respect to the \emph{tree-width} of the underlying graph.
In particular, they pose the following conjecture:
\begin{conjecture}[\cite{chekuri2009note}]
\label{conj:tw}
  The integrality gap of the standard multi-commodity flow relaxation for {\sc MaxEDP} is~$\Theta(w)$, where $w$ is the treewidth of the graph.
\end{conjecture}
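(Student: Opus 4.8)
The plan is to establish the two directions of the claimed $\Theta(w)$ bound separately: that the integrality gap is $\Omega(w)$, and that it is $\mathcal{O}(w)$. The lower bound $\Omega(w)$ is the more accessible half, and I would obtain it from the grid construction already alluded to in the text. Concretely, I would take the $w\times w$ grid, whose treewidth is exactly $w$, together with the terminal placement of Garg et al.~\cite{GargEtAl1997} along its boundary, for which the multi-commodity flow relaxation routes $\Omega(w)$ pairs fractionally while any edge-disjoint integral routing connects only a constant number of them, owing to the topological crossing obstruction in the plane. Since the gap $\Omega(w)$ grows directly with the treewidth $w$ of the instance, this already pins the ratio from below as a function of $w$, matching the conjectured bound. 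This direction requires only verifying the fractional routing and a cut-counting argument for the integral upper bound, so I do not expect it to be the bottleneck.

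The substance of the conjecture is the matching upper bound: for \emph{every} instance on a graph of treewidth $w$, the relaxation loses at most a factor $\mathcal{O}(w)$. Here I would start from an optimal fractional solution $f$ together with a width-$w$ tree decomposition, and aim to round $f$ to an integral routing of $\Omega(|f|/w)$ pairs. The natural machinery is the \emph{well-linked decomposition}: partition the routed terminals into groups that are well-linked inside induced subgraphs, losing at most a polylogarithmic factor in the LP value, and then exploit the fact that a well-linked set of size $h$ forces treewidth $\Omega(h)$, so that in a width-$w$ graph every group has size $\mathcal{O}(w)$. The remaining task is to convert the flow supported on each bounded group into an actual integral routing guided by the decomposition, routing the pairs toward the separators (bags) of size at most $w+1$ that their fractional paths must cross, and stitching the sub-routings together along the decomposition tree.

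The hard part, and where I expect the real obstacle to lie, is controlling the loss in this final flow-to-routing step \emph{without} incurring extra polylogarithmic overhead. Both the well-linked decomposition and the standard Raghavan--Thompson-style rounding of the residual flow typically cost $\mathrm{polylog}(n)$ factors, whereas the conjecture demands a clean $\Theta(w)$ with no such slack; eliminating these factors seems to require a rounding that exploits the tree structure far more tightly than generic flow rounding does. Compounding this, the grid example shows that the crossing obstruction is genuinely responsible for a $\Theta(w)$ loss, so any valid argument must account for exactly this much interference and no more, which rules out crude congestion bounds. This is precisely why, rather than resolving Conjecture~\ref{conj:tw} directly, the present paper instead establishes a guarantee of $\mathcal{O}(\sqrt{r}\cdot\log^{1.5}{kr})$ in terms of the (generally larger) feedback vertex set number $r$, and I would regard closing this polylogarithmic-and-topological gap as the decisive difficulty.
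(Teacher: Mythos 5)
You are attempting to prove something that the paper itself does not prove and cannot prove: Conjecture~\ref{conj:tw} is stated as an \emph{open conjecture} (attributed to Chekuri, Naves and Shepherd), and the paper uses it only as motivation for studying the feedback vertex set number. There is no ``paper's own proof'' to compare against, so the only question is whether your proposal closes the conjecture --- and it does not. Your lower-bound half is fine and standard: the Garg--Vazirani--Yannakakis grid instance on the $w\times w$ grid has treewidth $w$, fractional value $\Omega(w)$, and integral optimum $\mathcal O(1)$, giving an integrality gap of $\Omega(w)$. That direction was never in doubt.

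The genuine gap is the entire upper-bound direction, which is the actual content of the conjecture, and your sketch does not supply it. The machinery you invoke --- well-linked decompositions plus routing within groups of size $\mathcal O(w)$ and stitching along the tree decomposition --- is exactly the known approach, and it is known \emph{not} to give $\mathcal O(w)$: the decomposition step alone loses polylogarithmic factors in the LP value, and the flow-to-integral-routing step loses further polynomial-in-$w$ factors; the best bound achievable along these lines to date is the $\mathcal O(w^3)$ of Ene et al.~\cite{EneEtAl2016}, which the paper cites precisely to emphasize how far the state of the art is from the conjectured $\Theta(w)$. Your own third paragraph concedes that you do not know how to eliminate these losses and that this is ``the decisive difficulty'' --- which is an accurate assessment, but it means the proposal is a restatement of why the conjecture is open, not a proof of it. As written, the attempt establishes $\Omega(w)$ and leaves $\mathcal O(w)$ entirely unproven, so the conjecture remains exactly as open as before.
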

Recently, Ene et al.~\cite{EneEtAl2016} showed that {\sc MaxEDP} admits an $\mathcal O(w^3)$-approximation algorithm on graphs of treewidth at most $w$.
Theirs is the best known approximation ratio in terms of $w$, improving on an earlier~$\mathcal O(w\cdot 3^w)$-approximation algorithm due to Chekuri et al.
This shows that the problem seems more amenable on ``tree-like'' graphs.

However, for $w=\omega(n^{1/6})$, the bound is weaker than the bound of $\mathcal O(\sqrt{n})$.
In fact, {\sc EDP} remains $\mathsf{NP}$-hard even for graphs of \emph{constant} treewidth, namely treewidth $w = 2$~\cite{NishizekiEtAl2001}.
This further rules out the existence of a fixed-parameter algorithm for {\sc MaxEDP} parameterized by~$w$, assuming $\mathsf{P}\not=\mathsf{NP}$.
Therefore, to obtain fixed-parameter tractability results as well as better approximation guarantees, one needs to resort to parameters stronger than treewidth.

Another route to bridge the large gap between approximation lower and upper bounds for {\sc MaxEDP} is to allow the paths to have \emph{low congestion} $c$: that is, instead of requiring the routed paths to be pairwise disjoint, at most $c$ paths can use an edge.
In their groundbreaking work, Raghavan and Thompson~\cite{RaghavanThompson1987} introduced the technique of randomized rounding of LPs to obtain polynomial-time approximation algorithms for combinatorial problems. Their approach allows to route $\Omega(\opt)$ pairs of paths with congestion $\mathcal O\left(\frac{\log{n}}{\log\log{n}}\right)$.
This extensive line of research \cite{andrews:low-congestion,chuzhoy:edp-const-congestion,KawarabayashiKobayashi2011}  has culminated in a $\log^{\mathcal O(1)} k$-approximation algorithm with congestion~$2$ for {\sc MaxEDP}~\cite{chuzhoy2012polylogarithmic}.  A slightly weaker result also holds for \textsc{MaxNDP}~\cite{chekuri-ene:ndp-congestion}.

\subsection{Motivation and Contribution}
The goal of this work is to study disjoint paths problems under another natural measure for how ``far'' a graph is from being a tree. In particular, we propose to examine \textsc{MaxEDP} and \textsc{MaxNDP} under the \emph{feedback vertex set number}, which for a graph~$G$ denotes the smallest size $r$ of a set $R$ of $G$ for which $G - R$ is a forest.  Note that the treewidth of $G$ is at most $r+1$.  Therefore, given the $\mathsf{NP}$-hardness of {\sc EDP} for $w = 2$ and the current gap between the best known upper bound $\mathcal O(w^3)$ and the linear upper bound suggested by Conjecture~\ref{conj:tw}, it is interesting to study the stronger restriction of bounding the feedback vertex set number $r$ of the input graph.
Our approach is further motivated by the fact that \textsc{MaxEDP} is efficiently solvable on trees by means of the algorithm of Garg, Vazirani and Yannakakis~\cite{GargEtAl1997}. Similarly, \textsc{MaxNDP} is easy on trees (see Theorem~\ref{thm:maxndp-fpt-kr}).

Our main insight is that one can in fact obtain bounds in terms of $r$ that either strengthen the best known bounds or are almost tight (see Table~\ref{tab:results}). It therefore seems that parameter~$r$ correlates quite well with the ``difficulty'' of disjoint paths problems.

Our first result allows the paths to have small congestion: in this setting, we strengthen the result, obtained by the classic randomized LP-rounding approach of Raghavan and Thompson~\cite{RaghavanThompson1987}, that one can always route~$\Omega(\opt)$ pairs with congestion $\mathcal O\left(\frac{\log{n}}{\log\log{n}}\right)$ with constant probability.
\begin{theorem}
\label{thm:highprob}
For any instance $(G,\mathcal M)$ of {\sc MaxEDP}, one can efficiently find a routing of $\Omega(\opt)$ pairs with congestion $\mathcal O\left(\frac{\log{kr}}{\log\log{kr}}\right)$ with constant probability; in other words, there is an efficient $\mathcal O(1)$-approximation algorithm for {\sc MaxEDP} with congestion $\mathcal O\left(\frac{\log{kr}}{\log\log{kr}}\right)$.
\end{theorem}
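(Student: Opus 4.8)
The plan is to follow the randomized LP-rounding scheme of Raghavan and Thompson, but to replace its union bound over all $\poly(n)$ edges of $G$ by a union bound over only $\poly(kr)$ \emph{relevant} edges. To this end, I would first solve the standard multi-commodity flow relaxation, whose optimum is at least $\opt$, and then reduce the instance to a structurally small kernel $G'$ before rounding. The kernelization exploits that $F := G-R$ is a forest, where $R$ is a feedback vertex set with $|R|=r$. Concretely, I would (i) delete every pendant forest path ending in a non-terminal leaf, (ii) contract every maximal chain of degree-two non-terminal vertices whose two neighbours lie in $F$ into a single edge, and (iii) bound the number of forest vertices incident to $R$ (the \emph{portals}). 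Steps (i)--(ii) are safe because flow traverses such a bare chain without branching, so the congestion is constant along it and one representative edge suffices, and they do not change which pairs can be routed. Afterwards every surviving forest vertex is a terminal, a portal, or a branch vertex of degree $\ge 3$ in $F$; since branch vertices are at most the number of leaves, which is at most the number of terminals plus portals, the kernel size is governed by the $\le 2k$ terminals and the number of portals.

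The main obstacle is therefore to bound the number of portals---equivalently, the forest-internal routed length---by $\poly(kr)$. The contribution of the $R$-incident edges is clean: after scaling the relaxation so that the expected load $\mu_e := \expct{L_e}$ satisfies $\mu_e \le 1$ on every edge, each $v \in R$ carries at most $k$ units of flow (one per commodity, and a simple path visits $v$ once), so $\sum_{e \ni v} \mu_e \le 2k$ and hence $\sum_{e \text{ incident to } R} \mu_e \le 2kr$. The difficulty is that a single feedback vertex may be adjacent to arbitrarily many forest vertices, so a flow-path can traverse a long forest segment passing many portals, inflating $\sum_{e \in F} \mu_e$. I would overcome this by rerouting the relaxation to use, between consecutive visits to $R$, shortest forest segments and, whenever a segment's two endpoints share a feedback-vertex neighbour, the length-two shortcut through that vertex; the extra load is absorbed by the generous $2kr$ budget on $R$-incident edges. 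After this rerouting every essential forest segment is charged either to a terminal or to one of the $\le r$ visits to $R$ along a path, of which there are only $\mathcal{O}(kr)$ in total, so the number of essential portals, and with it $|E(G')|$, is $\poly(kr)$. I expect this portal/length bound to be the technically hardest step, since it is exactly where the improvement from $n$ to $kr$ is forced out of the forest structure.

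With the kernel in hand, the rounding is routine. I would independently select each commodity with probability equal to its (scaled) flow value and route it along a path sampled from a flow decomposition, so that the number $N$ of routed pairs satisfies $\expct{N} = \Omega(\opt)$ while $\mu_e \le 1$ for every edge of $G'$. A Chernoff bound gives $\prob{L_e > c} \le (e\mu_e/c)^c$, and summing over the $\poly(kr)$ edges of $G'$---equivalently using $\sum_e \mu_e = \expct{\text{total routed length}} = \poly(kr)$, split as $\sum_{e \text{ incident to } R}\mu_e \le 2kr$ plus the now-bounded forest part---the total failure probability drops below a constant for $c = \mathcal{O}\!\left(\frac{\log kr}{\log\log kr}\right)$. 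A final Markov and alteration step over the two events (sufficiently many pairs, and bounded congestion) yields, with constant probability, a routing of $\Omega(\opt)$ pairs with the claimed congestion in $G'$; mapping each contracted super-edge back to its chain transfers the routing to $G$ without increasing congestion, which completes the argument.
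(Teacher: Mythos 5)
Your high-level instinct --- replace the union bound over all $\poly(n)$ edges by one over $\poly(kr)$ ``relevant'' flow, using a Chernoff bound whose failure probability is proportional to $\mu_e$ --- is exactly the right one, and the last ingredient ($\prob{L_e > c} \leq (e\mu_e/c)^c \leq \mu_e(e/c)^c$ for $\mu_e \leq 1$) is indeed what the paper uses. But the central step of your plan, the one you yourself flag as hardest, has a genuine gap, and in the specific form you propose it fails. First, consolidating all flow between a pair $u,v \in R$ onto a single shortest forest segment is not congestion-safe: the total flow between one $R$-pair can be as large as $k$ (one unit per commodity), so a single representative segment would carry congestion $\Theta(k)$, not $\mathcal O(1)$. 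Any capacity-respecting aggregation must cap each representative at one unit of flow, which already forces up to $k$ representatives per $R$-pair; moreover, to keep the \emph{rerouting itself} from blowing up congestion one must guarantee that the representatives are pairwise edge-disjoint (different $R$-pairs' shortest segments can overlap heavily in the forest, e.g.\ in a caterpillar). The paper achieves this only via a careful processing order (deepest highest node first) together with an exemption rule: subpaths that already contain a previously created ``hot spot'' are never rerouted. That exemption is what makes the congestion-$2$ bound (Lemma~\ref{thm:congestiononfisatmost2}) provable, but it also means the aggregation does \emph{not} reach your goal: after it, the number of flow-carrying edges incident to a single hot spot --- and hence the number of ``essential portals'' and the total routed length --- remains unbounded in terms of $k$ and $r$. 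The paper states this explicitly; your fallback claim that $\sum_e \mu_e = \expct{\text{total routed length}} = \poly(kr)$ is false in general (a segment all of whose internal vertices are portals can have length $\Theta(n)$, cannot be contracted by your step (ii) since portals have degree $\geq 3$, and your length-two shortcut through a shared neighbor $w\in R$ breaks down when many segments, possibly of the same path, must be shortcut through the same $w$: the path then visits $w$ repeatedly and the flow-value argument bounding the load on $w$'s edges by $\mathcal O(k)$ no longer applies).

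What actually closes this gap in the paper is not a size bound at all, but a structural substitute for it: (i) the aggregation guarantees only that \emph{every subpath contains one of $\mathcal O(kr^2)$ hot spots} (Lemma~\ref{thm:onlyfewhotspots}); (ii) a separate structural lemma (Lemma~\ref{lem:hot-spot-congestion}) shows that if every edge incident on a hot spot has congestion at most $c\frac{\log kr}{\log\log kr}$, then \emph{every} edge of $G$ has congestion at most twice that, because any routed path crossing an edge $e$ of the forest must enter one of the (at most two) hot spots ``direct'' to $e$'s endpoints through one fixed edge; and (iii) the flow-proportional Chernoff bound is applied only per hot spot, where the total incident flow is at most $2k$, giving failure probability $1/(k^2r^3)$ per hot spot and then a union bound over the $\mathcal O(kr^2)$ hot spots. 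Thus edges far from hot spots never enter any union bound, and no bound on the number of flow-carrying edges or on total path length is ever needed --- which is precisely the quantity your kernelization cannot control. To repair your write-up you would need to either prove the $\poly(kr)$ essential-support bound (a statement the paper's own machinery does not yield, and which your rerouting as described does not establish), or replace it by a domination argument of type (ii).
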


Our second main result builds upon Theorem~\ref{thm:highprob} and uses it as a subroutine.
We show how to use a routing for {\sc MaxEDP} with low congestion to obtain a polynomial-time approximation algorithm for {\sc MaxEDP} \emph{without congestion} that performs well in terms of $r$.
\begin{theorem}
\label{thm:fvs}
  The integrality gap of the multi-commodity flow relaxation for {\sc MaxEDP} with~$k$ terminal pairs is  $\mathcal{O}(\sqrt{r}\cdot\log^{1.5} rk)$ for graphs with feedback vertex set number~$r$.
  Moreover, there is a polynomial time algorithm that, given a fractional solution to the relaxation of value $\mathsf{opt}$, it constructs an integral routing of size  $\mathsf{opt} / \mathcal{O}(\sqrt{r}\cdot\log^{1.5} rk)$. 
\end{theorem}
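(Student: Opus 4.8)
The plan is to combine the low-congestion routing of Theorem~\ref{thm:highprob} with a \emph{decongestion} step that exploits the feedback vertex set. Concretely, I would first solve the multi-commodity flow relaxation, obtaining a fractional solution of value $\mathsf{opt}$, and feed it to Theorem~\ref{thm:highprob} to extract a set $\mathcal Q$ of $\Omega(\mathsf{opt})$ pairs routed along paths with edge-congestion $c = \mathcal O(\log kr/\log\log kr)$. The whole remaining task is then purely combinatorial: from a congestion-$c$ routing of $N=|\mathcal Q|$ pairs, select an edge-\emph{disjoint} subrouting of at least $N/\mathcal O(\sqrt r\cdot \log^{1.5} kr)$ pairs. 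Multiplying the $\mathcal O(1)$ loss of Theorem~\ref{thm:highprob} with this decongestion loss yields the claimed integrality gap, and every step is polynomial and constructive.

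For the decongestion I would fix a feedback vertex set $R$ with $|R|=r$, let $F=G-R$ be the resulting forest, and split $\mathcal Q$ into the paths $\mathcal Q_F$ avoiding $R$ and the paths $\mathcal Q_R$ meeting $R$; at least one class contains half of $\mathcal Q$. The class $\mathcal Q_F$ is the easy one: its paths live entirely in $F$, so scaling the congestion-$c$ routing down by a factor $c$ produces a \emph{feasible} fractional tree solution of value $|\mathcal Q_F|/c$, and the exact/constant-factor algorithm of Garg, Vazirani and Yannakakis~\cite{GargEtAl1997} for \textsc{MaxEDP} on trees then returns an edge-disjoint routing of $\Omega(|\mathcal Q_F|/c)$ pairs. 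Since $c=\mathcal O(\log kr/\log\log kr)$ is well within the target bound, this case is settled with room to spare.

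The heart of the argument, and the step I expect to be the main obstacle, is decongesting $\mathcal Q_R$, where the factor $\sqrt r$ must appear. Here I would decompose each path at the (at most $r$) feedback vertices it visits into maximal forest segments, so that a path is described by the sequence of $R$-vertices it uses together with tree segments connecting them; the tree segments can again be handled losslessly by the algorithm of~\cite{GargEtAl1997}, which is exactly why only the $r$ hub vertices---and not the $\Theta(n)$ forest vertices---contribute to the polynomial factor. I would then run a threshold argument at scale $\sqrt r$: paths traversing at most $\sqrt r$ feedback vertices are ``cheap'' and can be packed greedily while charging their few $R$-usages, whereas if many surviving paths traverse more than $\sqrt r$ feedback vertices then the $r$ hubs are heavily loaded and a counting argument over $R$ caps their edge-disjoint number by $\mathcal O(N/\sqrt r)$. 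Balancing the two regimes gives the $\sqrt r$ loss, and the residual congestion $c$ together with the randomized selection needed to convert edge-congestion at the hubs into genuine edge-disjointness accounts for the extra $\log^{1.5} kr$ factor.

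The delicate point, which I would have to handle with care, is that \textsc{MaxEDP} forbids only \emph{edge} sharing, so many paths may legitimately pass through the same feedback vertex; the naive ``few long paths'' counting that powers the classical $\mathcal O(\sqrt n)$ bound therefore does not transfer directly, and the charging must be done against the edges incident to $R$ (weighted by the congestion) rather than against the vertices $R$ themselves. Making this charging tight---so that the loss is $\sqrt r$ and not something depending on the degrees of the hub vertices---is where the real work lies, and I expect it to require either a bounded-degree preprocessing of the feedback vertices or a direct LP-based selection of the $R$-crossing paths. Once that is in place, combining the forest and hub bounds and re-inserting the $\mathcal O(1)$ factor from Theorem~\ref{thm:highprob} gives a polynomial-time construction of an integral routing of size $\mathsf{opt}/\mathcal O(\sqrt r\cdot \log^{1.5} kr)$, which simultaneously certifies the stated integrality gap.
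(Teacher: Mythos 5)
Your skeleton matches the paper's: solve the LP, invoke Theorem~\ref{thm:highprob} to obtain $N=\Omega(\opt)$ paths with congestion $c=\mathcal O(\log kr/\log\log kr)$, then decongest with loss $\mathcal O(\sqrt r\cdot\mathrm{poly}(c))$ by splitting paths according to how many vertices of $R$ they visit (the paper's threshold is $r'=\sqrt{r/c}$ rather than $\sqrt r$, a tuning detail). But both branches of your decongestion have genuine gaps. For the ``cheap'' paths, greedy packing needs a bound on the number of \emph{edges} of each path, and a path visiting few vertices of $R$ can still have forest segments of unbounded length, so one selected path may conflict with unboundedly many others and the charging fails --- exactly the degree/length issue you flag at the end but do not resolve. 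Your fallback of routing the tree segments ``losslessly'' via Garg--Vazirani--Yannakakis does not repair this: the segments of one path must survive \emph{jointly} with its hub edges (an AND-constraint across segments), which is not a tree multicommodity flow problem, so the constant integrality gap on trees does not apply. (It does legitimately apply to your class $\mathcal Q_F$ of paths avoiding $R$ entirely; that part is fine.) The paper's missing ingredient here is Sect.~\ref{sec:double-path-cover}: contracting redundant edges makes the family of forest segments an \emph{irreducible routing with congestion $c$}, whence by Lemma~\ref{lem:path-length-forest} the average segment length is $\mathcal O(c)$; only after this contraction (harmless by Observation~\ref{obs:redToDPC}) do the shortest half of the paths have length $\mathcal O(r'c)$, which is what makes greedy packing lose only $\mathcal O(r'c^2)$.

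For the ``long'' paths your argument fails in two ways. First, the counting bound you invoke is simply false for \emph{edge}-disjoint paths: arbitrarily many pairwise edge-disjoint paths can each traverse all $r$ feedback vertices, since EDP does not forbid sharing vertices --- again the very objection you raise yourself without answering it. Second, even if such a cap held, an \emph{upper} bound of $\mathcal O(N/\sqrt r)$ on how many long paths are simultaneously routable points in the wrong direction: when all $N$ paths are long, you must still \emph{construct} $\Omega\bigl(N/(\sqrt r\log^{1.5}kr)\bigr)$ disjoint paths, and a cap on the long class yields no routing at all. The paper handles this case constructively and quite differently: viewing each long path as $1/c$ units of flow gives a feasible fractional solution $f$, the total inflow into $R$ is at least $|f|r'$, so by averaging a \emph{single} vertex $v\in R$ carries an $\Omega(r'/r)$-fraction of $|f|$, and Proposition~\ref{lem:single-node-routing} (the single-node rounding of Chekuri et al.) converts the flow through $v$ into an integral routing at the cost of a constant factor. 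This flow-concentration-plus-single-node-rounding step is the second key ingredient absent from your proposal, and it is precisely what replaces the vertex-disjointness counting that EDP does not permit.
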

In particular, our algorithm strengthens the best known approximation algorithm for {\sc MaxEDP} on general graphs~\cite{ChekuriEtAl2006} as always $r\leq n$, and indeed it matches that algorithm's performance up to polylogarithmic factors.  Substantially improving upon our bounds would also improve the current state of the art of \textsc{MaxEDP}.  Conversely, the result implies that it suffices to study graphs with close to linear feedback vertex set number in order to improve the currently best upper bound of~$\mathcal{O}(\sqrt{n})$ on the approximation ratio \cite{ChekuriEtAl2006}. 

Our algorithmic approaches harness the forest structure of $G - R$ for any feedback vertex set $R$.
However, the technical challenge comes from the fact that the edge set running between $G - R$ and $R$ is unrestricted.
Therefore, the ``interaction'' between $R$ and $G - R$ is non-trivial, and flow paths may run between the two parts in an arbitrary manner and multiple times.  In fact, we show that \textsc{MaxEDP} is already $\mathsf{NP}$-hard if $R$ consists of a \emph{single node} (Theorem~\ref{thm:edp-nphard-r2}); this contrasts the efficient solvability on forests~\cite{GargEtAl1997}.

In order to overcome the technical hurdles we propose several new concepts, which we believe could be of interest in future studies of disjoint paths or routing problems.

In the randomized rounding approach of Raghavan and Thompson \cite{RaghavanThompson1987}, it is shown that the probability that the congestion on any fixed edge is larger than $c\frac{\log n}{\log\log n}$ for some constant~$c$ is at most $1/n^{O(1)}$. Combining this with the fact that there are at most $n^2$ edges, yields that every edge has bounded congestion w.h.p. The number of edges in the graph may, however, be unbounded in terms of $r$ and $k$. Hence, in order to to prove Theorem~\ref{thm:highprob}, we propose a non-trivial \emph{pre-processing step} of the optimum LP solution that is applied prior to the randomized rounding.  In this step, we aggregate the flow paths by a careful rerouting so that the flow ``concentrates'' in $O(kr^2)$ nodes (so-called \emph{hot spots}) in the sense that if all edges incident on hot spots have low congestion then so have all edges in the graph.  Unfortunately, for any such hot spot the number of incident edges carrying flow may still be unbounded in terms of $k$ and $r$.  We are, however, able to give a refined probabilistic analysis that suitably relates the probability that the congestion bound is exceeded to the amount of flow on that edge.  Since the total amount of flow on each hot spot is bounded in terms of $k$, the probability that \emph{all} edges incident on the same hot spot have bounded congestion is inverse polynomial in~$r$ and~$k$.

The known $\mathcal{O}(\sqrt{n})$-approximation algorithm for \textsc{MaxEDP} by Chekuri et al. \cite{ChekuriEtAl2006} employs a clever LP-rounding approach. If there are many long paths then there must be a single node carrying a significant fraction of the total flow and a good fraction of this flow can be realized by integral paths by solving a single-source flow problem. If the LP solution contains many short flow paths then greedily routing these short paths yields the bound since each such path blocks a bounded amount of flow.  In order to prove Theorem~\ref{thm:fvs}, it is natural to consider the case where there are many paths visiting a large number of nodes in $R$.  In this case, we reduce to a single-source flow problem, similarly to the approach of Chekuri et al.
The case where a majority of the flow paths visit only a few nodes in $R$ turns out more challenging, since any such path may still visit an unbounded number of edges in terms of $k$ and $r$.  We use two main ingredients to overcome these difficulties.  First, we apply our Theorem~\ref{thm:highprob} as a building block to obtain a solution with logarithmic congestion while losing only a constant factor in the approximation ratio.  Second, we introduce the concept of \emph{irreducible routings with low congestion} which allows us exploit the structural properties of the graph and the congestion property to identify a sufficiently large number of flow paths blocking only a small amount of flow.

Note that the natural greedy approach of always routing the shortest conflict-free path gives only $\mathcal{O}(\sqrt{m})$ for \textsc{MaxEDP}.  We believe that it is non-trivial to obtain our bounds via a more direct or purely combinatorial approach.

Our third result is a fixed-parameter algorithm for {\sc MaxNDP} in~$k + r$.
\begin{theorem}
\label{thm:maxndp-fpt-kr}
  {\sc MaxNDP} can be solved in time $(8k+8r)^{2r+2}\cdot \mathcal O(n)$ on graphs with feedback vertex set number~$r$ and $k$ terminal pairs.
\end{theorem}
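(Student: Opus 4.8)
The plan is to exploit the forest structure of $F := G - R$ together with the fact that, in any \emph{node}-disjoint routing, every vertex of $R$ is used by at most one path. Fix an optimal routing and delete the $R$-vertices from its paths: each path is cut into maximal subpaths lying inside $F$, which I call \emph{segments}. Since the used $R$-vertices are distinct across the at most $k$ routed paths, the total number of $R$-occurrences is at most $r$; hence at most $r$ routed paths touch $R$ at all, and the total number of segments is at most $k+r$. The decisive point is that a segment lives in a forest, so it is the \emph{unique} tree-path between its two endpoints: a family of segments is completely described by its list of endpoint pairs, and its realizability reduces to checking that these unique tree-paths are pairwise node-disjoint. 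Thus, once the behaviour of the solution \emph{on} $R$ is fixed, the only remaining freedom is where each segment attaches to the forest.

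First I would obtain a feedback vertex set $R$ with $|R|=r$ (any fixed-parameter algorithm for \textsc{Feedback Vertex Set} runs well within the claimed budget), root every tree of $F$, and take the standard tree decomposition of $G$ of width at most $r+1$ whose bags consist of $R$ together with one or two forest vertices; it has $\mathcal{O}(n)$ bags. The algorithm is a bottom-up dynamic program over this decomposition. A DP state describes the \emph{local shape} of the routing by assigning, to each of the at most $r+1$ vertices whose incident routing is (re)decided at the bag (namely the $r$ vertices of $R$ plus the forest vertex introduced there), a pair of labels, one for each of its at most two incident path-edges; a label names what that edge connects to, i.e.\ a terminal pair, another bag vertex, or a segment continuing towards a specified $R$-vertex. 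Each label ranges over a set of size $\mathcal{O}(k+r)$, so after a careful encoding the number of states is $(8k+8r)^{2r+2}$. Crucially, because $R$ lies in \emph{every} bag, the state permanently remembers, for each $R$-vertex, which segment enters and leaves it and where that segment must continue; this is exactly what lets the DP reconnect segments across far-apart parts of the forest \emph{through} $R$, without ever storing a concrete vertex of $F$.

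For each consistent global pattern, the DP then solves the forest routing: it must (i) realize all mandatory segments forced by the pattern, choosing for each port at an $R$-vertex some forest neighbour as its attachment point, and (ii) route as many of the remaining terminal pairs as possible entirely inside $F$, everything node-disjointly. Because in a tree at most one solution path crosses any single edge, the information propagated across an edge to a bag's parent is tiny, and the optional pairs are handled by maximizing a routed-count carried in the DP value; the $r=0$ specialization is precisely the promised polynomial-time routine for forests. Merging a child bag into its parent only updates these labels locally, so the procedure can be organized so that the work is proportional to the number of states per bag, yielding total time $(8k+8r)^{2r+2}\cdot\mathcal{O}(n)$.

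I expect the main obstacle to be exactly the phenomenon flagged in the introduction: the edges between $R$ and $F$ are unrestricted, so a path may cross between the two sides arbitrarily often and may attach to a given $R$-vertex at any of its possibly $\Theta(n)$ forest neighbours. The art is to prove that all of this is nonetheless captured by a state of the stated size, by showing that fixing, for each $R$-vertex, only the two segments incident to it (with their far endpoints encoded as labels rather than as concrete forest vertices) is sufficient, and that the attachment vertices can then be selected locally by the forest DP instead of being guessed globally. Verifying that the labelled patterns are mutually consistent (ports matched, each $R$-vertex used at most once, each mandatory segment completed exactly once) and that the committed unique tree-paths are genuinely disjoint is delicate but \emph{bounded} bookkeeping, and it is what keeps the dependence on $n$ linear.
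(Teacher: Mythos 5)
Your proposal is correct and essentially matches the paper's proof: your ``segments'' whose endpoints are terminals or vertices of $R$ are exactly the paper's \emph{essential pairs} and their \emph{realizations}, and your label-based dynamic program over a width-$(r+1)$ decomposition with $R$ kept in every bag carries the same information as the paper's algorithm, which first enumerates the at most $(2k+r+1)^{2r}$ candidate sets $\mathcal{M}_e$ of essential pairs and then runs a bottom-up DP over the rooted forest $F = G-R$ whose state records the subset of $\mathcal{M}_e$ realized so far plus one pending connection $(u,b)$. The only difference is packaging---outer guessing plus subset tracking versus folding the whole interaction pattern into the DP state---and both organizations yield the $(8k+8r)^{2r+2}\cdot\mathcal{O}(n)$ bound.
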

This run time is polynomial for constant $r$.  We also note that for small~$r$, our algorithm is asymptotically significantly faster than the fastest known algorithm for NDP, by Kawarabayashi and Wollan~\cite{KawarabayashiWollan2010}, which requires time at least \emph{quadruple-exponential} in $k$ \cite{AdlerEtAl2011}.
Namely, if $r$ is at most triple-exponential in $k$, our algorithm is asymptotically faster than theirs. We achieve this result by the idea of so-called \emph{essential pairs} and \emph{realizations}, which characterizes the ``interaction'' between the feedback vertex set $R$ and the paths in an optimum solution.
Note that in our algorithm of Theorem~\ref{thm:maxndp-fpt-kr}, parameter $k$ does not appear in the exponent of the run time at all.
Hence, for small values of $r$ our algorithm is also faster than reducing {\sc MaxNDP} to {\sc NDP} by guessing the subset of pairs to be routed (at an expense of $2^k$ in the run time) and using Scheffler's~\cite{Scheffler1994} algorithm for {\sc NDP} with run time $2^{O(r\log r)}\cdot \mathcal O(n)$.

Once a fixed-parameter algorithm for a problem has been obtained, the existence of a polynomial-size kernel comes up.
Here we note that {\sc MaxNDP} does not admit a polynomial kernel for parameter $k+r$, unless~$\mathsf{NP}\subseteq \mathsf{coNP}/poly$~\cite{BodlaenderEtAl2011a}.

Another natural question is whether the run time $f(k,r)\cdot n$ in Theorem~\ref{thm:maxndp-fpt-kr} can be improved to~$f(r)\cdot n^{\mathcal O(1)}$.
We answer this question in the negative, ruling out the existence of a fixed-parameter algorithm for {\sc MaxNDP} parameterized by~$r$ (assuming $\mathsf{FPT}\not=\mathsf{W}[1]$):
\begin{theorem}
\label{thm:ndp-w1hard-r}
  {\sc MaxNDP} in unit-capacity graphs is $\mathsf{W}[1]$-hard parameterized by $r$.
\end{theorem}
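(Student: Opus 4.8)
The plan is to give a parameterized reduction from \textsc{Grid Tiling}, which is $\mathsf{W}[1]$-hard parameterized by the grid dimension $\kappa$: the instance is an integer $n$ and sets $S_{i,j}\subseteq[n]\times[n]$ for $(i,j)\in[\kappa]\times[\kappa]$, and the task is to pick $s_{i,j}=(a_{i,j},b_{i,j})\in S_{i,j}$ so that horizontally adjacent cells agree in their first coordinate and vertically adjacent cells agree in their second coordinate. From such an instance I would build a \textsc{MaxNDP} instance $(G,\mathcal M)$ with unit node capacities whose feedback vertex set number is $r=\mathcal{O}(\kappa^2)$, so that all terminal pairs can be routed if and only if the tiling instance is solvable. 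The backbone consists of $\kappa$ \emph{row selector} pairs $(s_i^R,t_i^R)$ and $\kappa$ \emph{column selector} pairs $(s_j^C,t_j^C)$. The path $R_i$ routing the $i$-th row pair is forced to sweep left to right through the cell positions $(i,1),(i,2),\dots,(i,\kappa)$, and which of $n$ parallel forest branches it enters encodes a single value $a\in[n]$; crucially, since $R_i$ is one connected path, this value is the \emph{same} in every cell it visits, so the horizontal agreement constraint is enforced for free. Symmetrically, $C_j$ sweeps top to bottom through $(1,j),\dots,(\kappa,j)$, encodes one value $b\in[n]$, and enforces vertical agreement automatically.

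At each cell position $(i,j)$ I would install a \emph{verification gadget} $D_{i,j}$ that both $R_i$ and $C_j$ must traverse: the gadget is to be designed so that the portion of $R_i$ carrying value $a$ and the portion of $C_j$ carrying value $b$ can be routed node-disjointly inside $D_{i,j}$ exactly when $(a,b)\in S_{i,j}$. For completeness, a valid tiling yields compatible values on every row and column, and the $2\kappa$ selector paths can then be routed simultaneously, each cell being consistent by construction. For soundness, any routing of all $2\kappa$ pairs assigns a well-defined value to every row and every column (by single-connectedness of the selectors), and node-disjointness in each $D_{i,j}$ certifies $(a_i^R,b_j^C)\in S_{i,j}$, recovering a tiling. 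Small auxiliary pairs or degree-one attachments would be used to force each selector to actually enter every cell on its line rather than taking a shortcut. Setting the routing threshold to $2\kappa$ then makes ``\textsc{MaxNDP} value $=2\kappa$'' equivalent to solvability of the tiling instance, and the whole construction is computable in time polynomial in $n$ and $\kappa$, as required for an $\mathsf{FPT}$ reduction.

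The technical heart, and the step I expect to be the main obstacle, is realizing the cell gadget $D_{i,j}$ with feedback vertex set contribution bounded by a function of $\kappa$ alone and independent of $n$. The naive realization as a crossing box, in which the $n$ ``$a$-rows'' and $n$ ``$b$-columns'' of $R_i$ and $C_j$ intersect according to the relation $S_{i,j}$, is doomed: because $S_{i,j}$ is an arbitrary bipartite relation, the induced crossing pattern carries $\Theta(n)$ independent cycles and hence has feedback vertex set number $\Theta(n)$, which would blow up $r$. The plan to overcome this is to route both selectors through a \emph{bounded} set of hub vertices placed inside $D_{i,j}$ and to encode the membership test so that every cycle of the gadget is forced to pass through these $\mathcal{O}(1)$ hubs, while the $n$-fold branching that encodes the two values lives entirely in the acyclic part attached to the hubs. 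Collecting the hubs over all $\kappa^2$ cells then gives a feedback vertex set of size $\mathcal{O}(\kappa^2)$, i.e.\ $r=\mathcal{O}(\kappa^2)$; verifying both that these hubs indeed hit all cycles and that the gadget routes disjointly precisely on $S_{i,j}$ is where the careful work lies, and is what separates this construction from the standard grid-crossing gadgets used in planar $\mathsf{W}[1]$-hardness proofs.
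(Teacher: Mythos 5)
Your overall framework (reduction from \textsc{Grid Tiling}, row/column selector paths, per-cell verification gadgets, threshold $2\kappa$) is a legitimate skeleton, but the proof has a genuine gap exactly where you flag it: the cell gadget $D_{i,j}$ is never constructed, and this is not deferred routine work---there is concrete structural reason to believe it cannot be built along the lines you describe. After deleting your $\mathcal{O}(1)$ hubs the gadget is a forest, so any hub-avoiding route of the row selector carrying value $a$ between its fixed attachment points is the \emph{unique} tree path $P_a$, and similarly the column value $b$ yields a unique path $Q_b$; the gadget would then have to realize ``$P_a \cap Q_b = \emptyset$ if and only if $(a,b)\in S_{i,j}$'' for an \emph{arbitrary} relation $S_{i,j}$. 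But vertex-intersection patterns of paths in a forest are heavily constrained: the intersection graph of any family of paths in a tree is chordal (paths are subtrees, and subtree intersection graphs are chordal), so, for instance, a pattern with values $a,a',b,b'$ where $P_a,P_{a'}$ are disjoint, $Q_b,Q_{b'}$ are disjoint, yet all four cross-intersections are nonempty, is an induced $C_4$ and hence impossible. Allowing routes through a constant number of hubs, or sprinkling in auxiliary terminal pairs, gives extra freedom, but you present no mechanism by which this freedom encodes an arbitrary bipartite relation---and the relation-encoding step is the entire content of the theorem. (This tension is invisible in the standard planar \textsc{Grid Tiling} reductions you allude to, because there the parameter is the number of terminals, not a structural width of the constructed graph, so gadgets with feedback vertex set number $\Theta(n)$ are harmless.)

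The paper sidesteps this dead end with a different encoding strategy (due to Ene et al., originally for treedepth). It reduces from \textsc{Multicolored Clique} and uses no long selector paths: the gadget for color class $V^i$ contains $n$ paths $X_v^i$ and $n-1$ terminal pairs $(s_v^i,t_v^i)$, each routable only via its own path $X_v^i$ or via one designated path $X_{u^i}^i$, so routing all of them leaves exactly one path $X_{v^i}^i$ unused---selection by \emph{elimination} rather than by choice. The adjacency test for a pair $(i,j)$ of classes is then encoded in \emph{which terminal pairs exist}: a single crossing vertex $p_{i,j}$ is adjacent to one vertex of every $X_v^i$ and of every $X_u^j$, and $(x_{v,j}^i, x_{u,i}^j)$ is a terminal pair only when $uv\in E(G)$; a global vertex-counting argument forces every routed test path to have exactly three vertices, so its endpoints must be the two free neighbors of $p_{i,j}$, certifying an edge. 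Since the relation lives in the terminal pair set rather than in the gadget topology, the whole graph has feedback vertex set $\mathcal{O}(k^2)$ (the $\binom{k}{2}$ crossing vertices plus two path endpoints per gadget). Note that this trick is incompatible with your single-sweep selectors: a connected selector choosing branch $a$ leaves the other $n-1$ branches free, so freeness of a vertex cannot certify the selected value; the paper's gadgets instead occupy all branches but one. Repairing your outline essentially forces you to adopt selection by elimination and terminal-pair-encoded tests, i.e., to reconstruct the paper's proof.
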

This contrasts the known result that NDP is fixed-parameter tractable in~$r$~\cite{Scheffler1994}---which further stresses the relevance of understanding this parameter.

For {\sc MaxEDP}, we prove that the situation is, in a sense, even worse:
\begin{theorem}
\label{thm:edp-nphard-r2}
  {\sc MaxEDP} is $\mathsf{NP}$-hard for unit-capacity graphs with $r = 1$ and \textsc{EDP} is $\mathsf{NP}$-hard for unit-capacity graphs with $r=2$.
\end{theorem}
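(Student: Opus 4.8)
The plan is to prove both statements by polynomial-time reductions from a strongly $\mathsf{NP}$-hard combinatorial problem (for concreteness one may take $3$-\textsc{Sat} or a packing problem such as $3$-\textsc{Partition}), producing in each case a \emph{hub-and-forest} instance: a graph $G$ together with a designated set $R$ of one (resp.\ two) \emph{hub} vertices such that $G-R$ is a forest. The guiding structural principle is that paths inside a forest between two fixed nodes are essentially forced, so every genuine routing \emph{conflict} must be created at the hubs. The hubs thus act as global ``switches'': a demand pair with endpoints in different trees of $G-R$ is routed by first climbing (along unique forest paths) to the hub region, crossing it, and descending again, and it is the unit capacities of the few edges incident to the hubs that encode the constraints of the source instance. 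This refines the picture around the efficient solvability on forests~\cite{GargEtAl1997}, showing that adding a single vertex already suffices for hardness.

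For \textsc{MaxEDP} with $r=1$ I would use a single hub $v$ and realise each binary decision of the source instance (a variable's truth value, or the side on which an item is placed) as a choice between two edge-disjoint routes for an associated demand pair: a ``forest route'' staying inside $G-v$ and a ``hub route'' crossing $v$. Clause- or constraint-gadgets are attached as pendant trees at $v$, and a target number $N$ of demand pairs is fixed so that a routing of size $N$ exists if and only if the source instance is a yes-instance. Because we maximise over subsets, the freedom to leave some auxiliary pairs unrouted is precisely what lets a single hub suffice. The verification then splits into the two implications: a satisfying assignment (resp.\ a valid partition) yields $N$ edge-disjoint paths by routing the chosen sides, while conversely any routing of size $N$ cannot reuse a hub edge, which forces the induced choices to be globally consistent and hence to solve the source instance.

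For the all-pairs version \textsc{EDP} with $r=2$ the maximisation slack disappears, since \emph{every} demand pair must be routed simultaneously. Without the option of dropping pairs, a single hub does not seem to provide a genuine binary choice, so I would compensate with a second hub: each pair is arranged to admit exactly two internally edge-disjoint routes, one through each hub, so that the binary choice is built into every pair. One then designs the demands so that the simultaneous edge-disjoint routability of \emph{all} pairs is equivalent to a globally consistent assignment, i.e.\ to a yes-instance, and checks that deleting the two hubs still leaves a forest, so $r=2$.

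The main obstacle in both reductions is that the feedback-vertex-set bound does not restrict the \emph{number} of cycles but forces every cycle, and hence every potential conflict, to pass through the hub(s); all the crossing demands therefore interact only through the unit-capacity interface at the hub(s), while the trees contribute only rigid, unique forest paths. The crux is the gadget design: one must arrange that (i) the forest part forces each demand onto a predictable set of edges, and (ii) the limited unit-capacity interface at the hub(s) captures the source constraints \emph{exactly}, with no unintended routings slipping through. I expect the hardest part to be this exactness argument together with keeping the construction polynomial---which is why a strongly $\mathsf{NP}$-hard source, whose magnitudes are polynomially bounded and thus expressible by bundles of unit-capacity forest edges, is preferable to a merely weakly hard one such as \textsc{Partition}.
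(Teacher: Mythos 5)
Your proposal is an outline rather than a proof---the gadgets, the target~$N$, and both directions of correctness are all deferred, and you yourself flag the ``exactness argument'' as the open crux---but the more serious problem is that your plan for \textsc{EDP} with $r=2$ is structurally incapable of working. You propose that every pair ``admit exactly two internally edge-disjoint routes, one through each hub, so that the binary choice is built into every pair.'' If a construction achieves this (either literally, or after an exactness argument showing that any full routing must use one of the two canonical routes per pair), then the resulting \textsc{EDP} instance is solvable in polynomial time: introduce one Boolean variable per pair recording which of its two routes it uses, and for every two routes of distinct pairs that share an edge add the $2$-clause forbidding that combination; since edge-disjointness is a purely pairwise constraint, feasibility is exactly a 2-SAT instance. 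So a correct reduction of this shape from an $\mathsf{NP}$-hard problem would place that problem in $\mathsf{P}$; it cannot establish $\mathsf{NP}$-hardness. What you are missing is that a feedback vertex set of size $2$ is compatible with \emph{three} mutually nonadjacent hub vertices, because deleting two of them leaves a star, which is a tree. The paper exploits exactly this: it reduces from \textsc{Edge 3-Coloring} in cubic graphs $H$ (NP-hard by Holyer) to \textsc{EDP} on the complete bipartite graph with parts $\{v_1,v_2,v_3\}$ and $V(H)$, with one terminal pair per edge of $H$. Each pair then has three length-$2$ routes, one per hub; a degree-counting argument (each node of $V(H)$ has degree $3$ and lies in three pairs) forces every path in a full routing to have length $2$, and routing through $v_c$ corresponds to coloring that edge with color $c$. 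Three choices per pair is precisely what escapes the 2-SAT obstruction, and your two-hub design forfeits it.

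Your $r=1$ plan (binary route choice plus maximization slack) is in the right spirit---the paper's $r=1$ argument also gives each pair two routes and uses the freedom to leave pairs unrouted as, in effect, the third color---but it is only a sketch, with none of the gadgets or the correctness argument supplied. Note also that a single hub is not required for $r=1$: the paper uses the complete bipartite graph with parts $\{v_1,v_2\}$ and $V(H)$, whose deletion of $v_1$ alone leaves a star. Its correctness proof is a short counting argument: each terminal has degree $2$ in $G$ but lies in three pairs, so at most two pairs per terminal can be routed; routing $n=|V(H)|$ pairs forces exactly two per terminal and all paths to have length $2$, and the unrouted pairs form a perfect matching of $H$, which becomes the third color class. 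By contrast, your design routes some demands along ``forest routes'' and attaches clause gadgets as pendant trees, so conflicts can occur deep inside the forest and must be controlled there, which is considerably more delicate than the paper's construction, in which every edge is incident to one of the two special vertices and all conflicts are concentrated there. As it stands, both halves of your proposal have genuine gaps, and the $r=2$ half would need a different design (at least three effective routes per pair) even in principle.
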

This theorem also shows that our algorithms are relevant for small values of $r$, and they nicely complement the $\mathsf{NP}$-hardness for {\sc MaxEDP} in capacitated trees~\cite{GargEtAl1997}.

Our results are summarized in Table~\ref{tab:results}.
\begin{table}[htpb]
  \centering
  \begin{tabular}{clcccc}
    \toprule
      const.     & param.    & EDP                      & {\sc MaxEDP}            & NDP & {\sc MaxNDP}\\
    \midrule
      $r = 0$    &           & poly~\cite{GargEtAl1997} & poly~\cite{GargEtAl1997} & poly~\cite{Scheffler1994} & poly~(Thm.~\ref{thm:maxndp-fpt-kr})\\      
      $r = 1$    &    & \emph{open}                              & {$\mathsf{NP}$-hard~(Thm.~\ref{thm:edp-nphard-r2})}          & poly~\cite{Scheffler1994} & poly~(Thm.~\ref{thm:maxndp-fpt-kr})\\
      $r \geq 2$ &           & $\mathsf{NP}$-hard~(Thm.~\ref{thm:edp-nphard-r2}) & $\mathsf{NP}$-hard~(Thm.~\ref{thm:edp-nphard-r2}) & poly~\cite{Scheffler1994} & poly~(Thm.~\ref{thm:maxndp-fpt-kr}) \\
      \midrule
                 & $r$       & \multicolumn{2}{c}{para-$\mathsf{NP}$-hard~(Thm.~\ref{thm:edp-nphard-r2})} & $\mathsf{FPT}$~\cite{Scheffler1994} & $\mathsf{W}[1]$-hard~(Thm.~\ref{thm:ndp-w1hard-r})\\
                 &           & \multicolumn{2}{c}{$\mathcal{O}(\sqrt{r}\cdot \log^{1.5}{kr})$-approx~(Thm.~\ref{thm:fvs})} & \multicolumn{2}{c}{exact $(k+r)^{\mathcal O(r)}$~(Thm.~\ref{thm:maxndp-fpt-kr})} \\
                 &           &\multicolumn{2}{c}{$\mathcal O(1)$-approx. w.cg. $\mathcal O\left(\frac{\log{kr}}{\log\log{kr}}\right)$} (Thm.~\ref{thm:highprob})\\
    \bottomrule
  \end{tabular}
  \caption{Summary of results obtained in this paper.}
  \label{tab:results}
\end{table}

\medskip
\noindent
\textbf{Related Work.}
Our study of the feedback vertex set number is in line with the general attempt to obtain bounds for MaxEDP (or related problems) that are independent of the input size.  Besides the above-mentioned works that provide bounds in terms of the \emph{tree-width} of the input graph, G{\"u}nl{\"u}k \cite{Gunluk2007} and Chekuri et al.~\cite{flow-cut-gaps-chekuri13} give bounds on the \emph{flow-cut gap} for the closely related integer multicommodity flow problem that are logarithmic with respect to the \emph{vertex cover number} of a graph. This improved upon earlier bounds of~$\mathcal O(\log n)$~\cite{LeightonRao1999} and $\mathcal O(\log k)$ \cite{flow-cut-theorem-aumann98,geometry-graphs-linial95}.
As every feedback vertex set is in particular a vertex cover of a graph, our results generalize earlier work for disjoint path problems on graphs with bounded vertex cover number.
Bodlaender et al.~\cite{BodlaenderEtAl2011a} showed that {\sc NDP} does not admit a polynomial kernel parameterized by vertex cover number \emph{and} the number~$k$ of terminal pairs, unless $\mathsf{NP}\subseteq \mathsf{coNP}/poly$ ; therefore, {\sc NDP} is unlikely to admit a polynomial kernel in $r + k$ either.
Ene et al.~\cite{EneEtAl2016} showed that {\sc MaxNDP} is $\mathsf{W}[1]$-hard parameterized by treedepth, which is another restriction of treewidth that is incomparable to the feedback vertex set number.

The basic gap in understanding the approximability of MaxEDP has led to several improved results for special graph classes, and also our results can be seen in this light. For example, polylogarithmic approximation algorithms are known for graphs whose global minimum cut value is $\Omega(\log^5 n)$ \cite{RaoZhou2010}, for bounded-degree
expanders~\cite{BroderEtAl1994,BroderEtAl1999,KleinbergRubinfeld1996,LeightonRao1999,Frieze2000}, and for Eulerian planar or 4-connected planar graphs~\cite{KawarabayashiKobayashi2011}.
Constant factor approximation algorithms are known for capacitated trees~\cite{GargEtAl1997,ChekuriEtAl2007}, grids and grid-like graphs~\cite{AumannRabani1995,AwerbuchEtAl1994,KleinbergTardos1995,KleinbergTardos1998}.
For planar graphs, there is a constant-factor approximation algorithm with congestion 2~\cite{SeguinCharbonneauShepherd2011}.
Very recently, Chuzhoy et al.~\cite{Chuzhoy-etal:NDP-planar} gave a $\tilde{\mathcal{O}}(n^{9/19})$-approximation algorithm for MaxNDP on \emph{planar} graphs. 
However, improving the $\mathcal O(\sqrt{n})$-approximation algorithm for \textsc{MaxEDP} remains elusive even for \emph{planar} graphs.

\section{Preliminaries}
\label{sec:preliminaries}
We use standard graph theoretic notation.
For a graph $G$, let $V(G)$ denote its vertex set and~$E(G)$ its edge set.
Let $G$ be a graph.
A \emph{feedback vertex set} of $G$ is a set $R\subseteq V(G)$ such that $G - R$ is a forest.
A \emph{minor} of~$G$ is a graph $H$ that is obtained by successively contracting edges from a subgraph of $G$ (and deleting any occurring loops).
A class $\mathcal G$ of graphs is \emph{minor-closed} if for any graph in $\mathcal G$ also all its minors belong to $\mathcal G$.

For an instance $(G,\mathcal M)$ of {\sc MaxEDP/MaxNPD}, we refer to the vertices participating in the pairs $\mathcal M$ as \emph{terminals}.
It is convenient to assume that~$\mathcal M$ forms a matching on the terminals; this can be ensured by making several copies of a terminal and attaching them as leaves.

\medskip
\noindent
\textbf{Multi-commodity flow relaxation.}
We use the following standard multi-commodity flow relaxation for {\sc MaxEDP} (there is an analogous relaxation for {\sc MaxNDP}).
We use $\mathcal P(u, v)$ to denote the set of all paths in $G$ from $u$ to $v$, for each pair $(u, v)$ of nodes. Since the pairs~$\mathcal M$ form a matching, the sets~$\mathcal P(s_i, t_i)$ are pairwise disjoint.
Let $\mathcal P = \bigcup_{i = 1}^k \mathcal P(s_i, t_i)$.
The LP has a variable $f(P)$ for each path $P \in \mathcal P$ representing the amount of flow on $P$.
For each pair $(s_i, t_i) \in \mathcal M$, the LP has a variable $x_i$ denoting the total amount of flow routed for the pair (in the corresponding IP, $x_i$ denotes whether the pair is routed or not).
The LP imposes the constraint that there is a flow from $s_i$ to $t_i$ of value $x_i$.
Additionally, the LP has constraints that ensure that the total amount of flow on paths using a given edge (resp. node for {\sc MaxNDP}) is at most 1. 

\begin{figure}[htb]
  \begin{center}
    \begin{boxedminipage}{0.38\textwidth}
      \vspace{-0.1in}
      \begin{align*}
	                     & (\textnormal{{\sc MaxEDP} LP})                           &\\
	   ~ \max \quad & \sum_{i = 1}^k x_i                                  &\\
	   \text{s.t.} \quad & \sum_{P \in \mathcal P(s_i, t_i)} f(P) = x_i \leq 1 & i=1,\hdots,k,~\\
                         & \sum_{P:\; e \in P} f(P) \leq 1       & e \in E(G)~\\\
                         & f(P) \geq 0                                         & P \in \mathcal P~
     \end{align*}
    \end{boxedminipage}
    \hspace{0.1in}
    \begin{boxedminipage}{0.4\textwidth}
      \centering
      \includegraphics[scale=0.63]{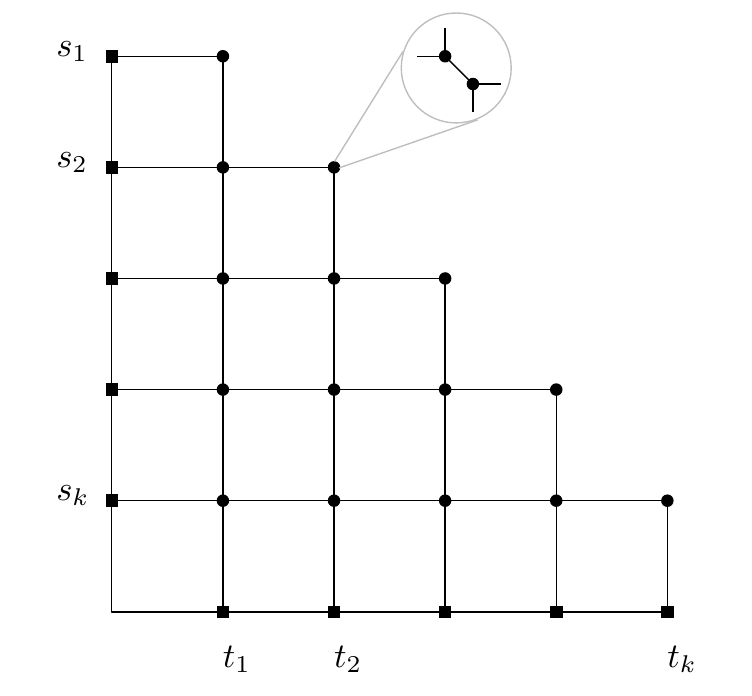}
    \end{boxedminipage}
  \end{center}
  \caption{Multi-commodity flow relaxation for {\sc MaxEDP}.
  Right: $\Omega(\sqrt{n})$ integrality gap for {\sc MaxEDP}~\cite{GargEtAl1997}: any integral routing routes at most one pair, whereas a multi-commodity flow can send $1/2$ unit of flow for each pair $(s_i, t_i)$ along the canonical path from $s_i$ to~$t_i$ in the grid.}
\label{fig:lp}
\end{figure}

It is well-known that the relaxation {\sc MaxEDP LP} can be solved in polynomial time, since there is an efficient separation oracle for the dual LP (alternatively, one can write a compact relaxation).
We use $(f, \mathbf{x})$ to denote a feasible solution to {\sc MaxEDP LP} for an instance $(G, \mathcal M)$ of {\sc MaxEDP}.
For each terminal $v$, let $x(v)$ denote the total amount of flow routed for $v$ and we refer to~$x(v)$ as the \emph{marginal value} of $v$ in the multi-commodity flow $f$. 

We will use the following result by Chekuri et al.~\cite[Sect. 3.1]{ChekuriEtAl2006}; see also Proposition 3.3 of Chekuri et al.~\cite{cns-tw-corr}. 
\begin{proposition}
\label{lem:single-node-routing}
  Let $(f,\bf{x})$ be a fractional solution to the LP relaxation of a \textsc{MaxEDP} instance $(G,\mathcal M)$.
  If some node $v$ is contained in all flow paths of~$f$, then we can find an integral routing of size at least $\frac{1}{12}\sum_{i}x_i$ in polynomial time.
\end{proposition}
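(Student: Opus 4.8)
The plan is to exploit the single-node bottleneck to reduce the problem to a single-source (equivalently single-sink) flow problem, which is exactly the regime where integrality can be recovered. First I would \emph{split every flow path at} $v$: a path $P$ realizing some mass of pair $(s_i,t_i)$ decomposes into an $s_i$--$v$ half and a $v$--$t_i$ half. Collecting all such halves turns $v$ into a common sink to which every terminal $u\in\{s_1,t_1,\dots,s_k,t_k\}$ sends flow equal to its marginal $x(u)$, with $x(s_i)=x(t_i)=x_i$. The key observation is that this combined half-flow still respects the unit edge capacities: on any edge $e$, each original path through $e$ contributes exactly one half (either its $s$-side or its $t$-side piece depending on where $e$ lies relative to $v$), so the total half-flow on $e$ equals the original flow on $e$, which is at most $1$. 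Crucially, an integral edge-disjoint routing of pair $i$ is now equivalent to producing two \emph{globally} edge-disjoint half-paths, one from $s_i$ to $v$ and one from $t_i$ to $v$; concatenating the matched halves at $v$ yields the $s_i$--$t_i$ path, and global edge-disjointness of all chosen halves guarantees edge-disjointness of the assembled paths.

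Next I would set up the single-source rounding. The natural vehicle is a super-source $\sigma$ together with a gadget node $p_i$ per pair, arcs $\sigma\to p_i$ of capacity $2$ and arcs $p_i\to s_i$, $p_i\to t_i$ of capacity $1$, on top of the original graph $G$ (unit capacities) oriented towards the sink $v$. The split half-flow above exhibits a \emph{fractional} $\sigma$--$v$ flow of value $2\sum_i x_i$, pushing $2x_i$ through each $p_i$. Since all capacities are integral, there is an integral $\sigma$--$v$ flow of value at least $2\sum_i x_i$, i.e.\ a family of edge-disjoint paths; each $p_i$ is used by $0$, $1$, or $2$ of them. An equivalent and cleaner viewpoint, which I would likely prefer for the accounting, is matroidal: the sets of terminals that can be simultaneously linked to $v$ by edge-disjoint paths form a matroid, and the marginal vector $(x(u))_u$ lies in its independent-set polytope. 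This containment follows from Menger's theorem plus integrality of single-commodity flow: for every terminal set $A$, the min edge-cut separating $A$ from $v$ equals the maximum number of edge-disjoint $A$--$v$ paths (the rank of $A$), and the half-flow restricted to $A$ certifies that this min-cut is at least $\sum_{u\in A} x(u)$.

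The main obstacle is neither the reduction nor the linkage itself, but the \emph{pairing}: a maximum integral flow (or a maximum-cardinality linkage) may route only one endpoint of many pairs, so that routed half-paths are ``wasted'' and few whole pairs are completed—an all-or-nothing, matroid-parity--type constraint sits on top of the single-sink structure. To control this loss by a constant factor I would threshold the pairs into \emph{heavy} ($x_i\ge 1/2$) and \emph{light} ($x_i<1/2$) and keep only the side carrying at least half of $\sum_i x_i$. For the heavy side, each completed pair blocks only a bounded amount of fractional flow (its two halves carry $\ge 1$ unit in total), so a direct greedy/augmenting argument on the $\sigma$--$v$ flow completes a constant fraction of the heavy pairs. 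For the light side, I would aggregate the small-marginal pairs and again invoke the single-source flow integrality, arguing that a constant fraction of the light mass can be realized by whole pairs. Carefully combining the factor from the heavy/light split, the factor from restricting to completed (rather than partial) pairs, and the factor from rounding the single-sink flow yields the stated bound of $\tfrac{1}{12}\sum_i x_i$; each step is polynomial-time, being a single-commodity max-flow computation or a greedy pass, so the overall routing is found efficiently.
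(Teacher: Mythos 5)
First, a point of context: the paper never proves this proposition itself; it imports it from Chekuri, Khanna and Shepherd (Sect.~3.1 of their $\mathcal O(\sqrt n)$ paper), whose argument, as this paper summarizes it, reduces the single-node case to a single-source flow computation. Your opening steps match the starting point of that cited argument and are correct: splitting every flow path at $v$, observing that the two half-flows jointly still respect the unit capacities (each simple path crosses any edge in exactly one of its halves), and recasting a routing of pair $i$ as two globally edge-disjoint half-paths glued at $v$. You also correctly isolate the crux, namely the pairing of the two endpoints of each pair.

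However, your resolution of the pairing obstacle has a genuine gap, and in fact the plan cannot be completed as sketched, because every tool you invoke after the split (integrality of the maximum flow in the $\sigma$/$p_i$ gadget, membership of the marginals in the gammoid polytope via Menger, the heavy/light threshold) uses only the half-flow/marginal data and discards the fact that the halves glue into full capacity-respecting paths. That reduced data is provably insufficient. Consider pairs $(s_i,t_i)$ in which both terminals are pendant on a common node $a_i$ that is joined to $v$ by a single edge: half-flows with $x_i=1/2$ for every $i$ exist (each terminal sends $1/2$ along its leaf edge and then along $a_iv$), the marginals lie in the gammoid polytope, and every pair is \emph{heavy}; yet not a single pair admits two edge-disjoint half-paths to $v$, since both halves would need the edge $a_iv$. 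Such marginals cannot arise from an LP solution whose full paths all contain $v$ (a simple path cannot cross the bottleneck edge twice), but your argument never uses this gluing constraint after the split, so it would ``prove'' a false statement. Concretely: (i) your heavy-side claim that a completed pair ``blocks only a bounded amount of fractional flow'' confuses the flow a pair \emph{carries} (at least one unit) with the flow it \emph{blocks}, which can be one unit per edge of its two half-paths and hence unbounded in $k$; this long-path phenomenon is exactly why naive greedy gives only $\mathcal O(\sqrt m)$ for \textsc{MaxEDP}, as the paper itself remarks, and no augmenting argument is available either, since a maximum gadget flow admits no augmenting paths and choosing, among maximum flows, one maximizing the number of $p_i$'s carrying two units is not a flow computation. (ii) The light-side step---``aggregate and argue that a constant fraction of the light mass can be realized by whole pairs''---is a restatement of the goal, and aggregation reintroduces the identical pairing problem inside each group. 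A correct proof must exploit the glued structure, for instance that every flow path enters and leaves $v$ on two \emph{distinct} edges of $\delta(v)$, so that $f$ induces a fractional matching on $\delta(v)$ which must be rounded jointly with the global single-sink routing; this is where the cited proof does its real work, and that ingredient is entirely missing from your sketch.
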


\section{Bi-Criteria Approximation for MaxEDP with Low Congestion}
\label{sec:bi-crit-appr}
We present a randomized rounding algorithm that will lead to the proof of Theorem~\ref{thm:highprob}.

\subsection{Algorithm}
\label{sec:rand-rounding}
Consider an instance $(G,\mathcal M)$ of {\sc MaxEDP}.
Let $R$ be a 2-approximate minimum feedback vertex set of $G$ and let $r = |R|$; note that such a set $R$ can be obtained in polynomial time~\cite{BafnaEtAl1999}. 

For the sake of easier presentation, we will assume in this section that the feedback vertex set $R$ contains all terminal nodes from $\mathcal M$.  This can be achieved by temporarily adding the set of terminals to the feedback vertex set $R$.  Also note that this assumption increases the bound of Theorem~\ref{thm:highprob} by at most a constant factor.

First, solve the corresponding {\sc MaxEDP LP}.
We obtain an optimal solution~$(f,\mathbf{x})$. 
For each~$(s_i, t_i)\in\mathcal M$ we further obtain a set $\mathcal{P}'(s_i, t_i) = \{P \in \mathcal{P}(s_i, t_i) \mid~f(P)>0\}$ of positive weighted paths that satisfy the LP constraints.
Note that the total set ${\mathcal P' = \bigcup_{i = 1}^k \mathcal P'(s_i, t_i)}$ is of size polynomially bounded in the input size.
In what follows, we will modify~$\mathcal P'$ and then select an (unweighted) subset~$\mathcal S$ of~$\mathcal P'$ that will form our integral solution.

Each $P\in\mathcal{P'}$ has the form $(r_1,\dots,r_2,\dots,r_\ell)$ where $r_1,\dots,r_\ell$ are the nodes in~$R$ that are traversed by $P$ in this order.  The paths~$(r_j,\dots,r_{j+1})$ with $j=1,\dots,\ell-1$ are called \emph{subpaths of}~$P$.
For every subpath $P'$ of~$P$, we set~$f(P')=f(P)$.
Let~$\mathcal J$ be the multi-set of all subpaths of all paths in $\mathcal P'$.
Let $F = G-R$ be the forest obtained by removing~$R$.

We now modify some paths in~$\mathcal P'$, one by one, and at the same time construct a subset~$H$ of nodes that we will call ``hot spots''. 
At the end, every subpath in $\mathcal J$ will contain at least one hot spot.

Initially, let $H=\emptyset$.
Consider any tree~$T$ in $F$ and fix any of its nodes as a root. Then let~$\mathcal{J}_T$ be the multi-set of all subpaths in $\mathcal J$ that, excluding the endpoints, are contained in~$T$. For each subpath $P\in \mathcal{J}_T$, define its \emph{highest node}~$h(P)$ as the node on~$P$ closest to the root. Note that $P\cap T=P\cap F$ is a path.
Now, pick a subpath~$P\in \mathcal{J}_T$ that does not contain any node in $H$ and whose highest node $h(P)$ is \emph{farthest away} from the root. Consider the multi-set~$\mathcal{J}[P]$ of all subpaths in $\mathcal{J}_T$ that are identical to $P$ (but may be subpaths of different flow paths in $\mathcal{P}'$). Note that the weight~$f(\mathcal{J}[P]) := \sum_{P\in\mathcal{J}[P]} f(P)$ of $\mathcal{J}[P]$ is at most $1$ by the constraints of the LP. 
Let $u,v\in R$ be the endpoints of $P$. We define $\mathcal{J}_{uv}$ as the set of all subpaths in $\mathcal{J}\setminus \mathcal{J}[P]$ that have $u$ and $v$ as their endpoints and that do not contain any node in~$H$.

Intuitively speaking, we now aggregate flow on $P$ by rerouting as much flow as possible from~$\mathcal{J}_{uv}$ to~$P$.  To this end, we repeatedly perform the following operation as long as $f(\mathcal{J}[P])<1$ and $\mathcal{J}_{uv}\not=\emptyset$.  We pick a path~$P'$ in $\mathcal J$ that contains a subpath in $\mathcal{J}_{uv}$.  We reroute flow from $P'$ by creating a new path~$P''$ that arises from $P'$ by replacing its subpath between $u$ and $v$ with $P$, and assign it the weight $f(P'')=\min\{f(P'),1-f(\mathcal{J}[P])\}$.  Then we set the weight of (the original path) $P'$ to $\max\{0, f(P')+f(\mathcal{J}[P])-1\}$.  We update the sets~$\mathcal{P'}$,~$\mathcal{P}'(s_i,t_i)$,~$\mathcal{J}$,~$\mathcal{J}_T$, $\mathcal{J}[P]$ and $\mathcal{J}_{uv}$ accordingly.

As soon as $f(\mathcal{J}[P])=1$ or $\mathcal{J}_{uv}=\emptyset$, we add $h(P)$ to $H$. Then, we proceed with the next $P\in\mathcal{J}_T$ not containing a hot spot and whose highest node $h(P)$ is farthest away from the root. If no such $P$ is left we consider the next tree~$T$ in $F$. 

At the end, we create our solution~$\mathcal S$ by randomized rounding: 
We route every terminal pair~$(s_i, t_i)$ with probability $x_i$.  In case $(s_i,t_i)$ is routed, we randomly select a path from~$\mathcal P'(s_i, t_i)$ and add it to~$\mathcal S$ where the probability that path $P$ is taken is $f(P)/x_i$.

\subsection{Analysis}
First, observe that~$\mathbf{x}$ did not change during our modifications of the paths, as the total flow between any terminal pair did not change.
Thus, the expected number of pairs routed in our solution is~$\sum_{i=1}^k x_i\ge \opt$. Using the Chernoff bound, the probability that we route less than $\opt/2$ pairs is at most $e^{-1/8\opt}<1/2$, assuming that $\opt>8$.
Secondly, we bound the congestion of our solution---our second criterion.
\begin{lemma}
\label{thm:congestiononfisatmost2}
  The congestion of flow $f$ is at most~2.
\end{lemma}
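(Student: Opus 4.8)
The plan is to track the flow through the entire rerouting process and to show, as an invariant, that the congestion never exceeds $2$. Before the rerouting, $(f,\mathbf{x})$ is LP-feasible, so every edge carries flow at most $1$. The first structural facts I would isolate are that the rerouting (i) \emph{conserves} each commodity, since every new path $P''$ has the same terminals $s_i,t_i$ as $P'$ (we only replace a middle subpath between two nodes $u,v\in R$ by another subpath between $u$ and $v$), and (ii) only ever \emph{concentrates} flow: weight is added exclusively to the current representative route $\rho_P$, and whenever we add to it we cap so that the total flow $f(\mathcal J[P])$ carried by that exact node-sequence stays at most $1$. A useful consequence is that for \emph{every} route $\rho$ (every fixed node-sequence of a subpath) the quantity $f(\rho)$ is at most $1$ at all times: a route is chosen as a representative at most once, is capped at $1$ while it is, and is frozen afterwards because it then contains the hot spot $h(P)$; any route never chosen as a representative only loses flow, starting from a value at most $1$ by LP feasibility.

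Next I would localize where congestion can grow. A single reroute moves weight $\delta=f(P'')$ off the old subpath of $P'$ between $u$ and $v$ and onto $\rho_P$, so the total flow on an edge $e$ strictly increases only when $e\in\rho_P$ while $e$ does not lie on the replaced subpath of $P'$. Hence it suffices to bound, at the moment we aggregate onto $\rho_P$, the flow on any edge $e\in\rho_P$ coming from routes other than $\rho_P$ by $1$; combined with $f(\rho_P)\le 1$ this gives congestion at most $2$ on $e$, and an induction over the reroutes then yields the invariant. I would prove this ``other-route'' bound by exploiting that representatives are handled in order of \emph{decreasing depth of their highest node}: when $\rho_P$, with highest node $c$, is aggregated, every subpath whose highest node is strictly deeper than $c$ has already been finalized and hence contains a hot spot that freezes its weight, which severely restricts which competing routes can still be loading $e$.

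The heart of the argument, and what I expect to be the main obstacle, is turning this ordering into a clean charging scheme on a fixed forest edge $e=(a,b)$ with $b$ the child in its rooted tree. Since $e$ is a bridge, every subpath crosses $e$ at most once, splitting into a part inside the subtree $T_b$ and a part above $a$, and every crossing subpath has its highest node among the ancestors of $a$. Using the terminal property that at the end every positive-flow subpath contains a hot spot, I would charge each route through $e$ to a hot spot it contains, partitioned by whether that hot spot lies inside $T_b$ (``below'' $e$) or at $a$ or above (``above'' $e$), and argue that each class carries total flow at most $1$. The below-$e$ class should be controlled because the bottom-up order forces such flow to have been funneled, before $e$ is ever touched, through a hot spot inside $T_b$ reachable only via $b$, where the per-route cap applies; the above-$e$ class should be controlled symmetrically via the fact that each aggregation merges only subpaths with the \emph{same} endpoints onto a common representative through the shallow hot spot. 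Making this dichotomy rigorous---in particular ruling out that many distinct routes with different endpoints simultaneously load $e$ from the same side---is the delicate point, and is exactly where the order of hot-spot creation and the equal-endpoint nature of each aggregation step must be combined. The edges incident to $R$ are not forest edges, but they are the first or last edge of a subpath and emanate from a fixed node of $R$, so I expect the same per-route charging to bound their congestion analogously.
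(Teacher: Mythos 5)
Your proposal sets up the right invariant (initial LP congestion at most $1$, route weights capped at $1$, increases happening only on representative routes) and even names the right structural lever---the processing order by depth of the highest node, together with the rule that a chosen representative must be hot-spot-free---but it stops short of the one claim that makes everything work, and the replacement you sketch for it would fail. The missing claim is simply: \emph{any two subpaths on which the algorithm ever increases flow are pairwise edge-disjoint}. This is essentially the paper's entire proof. If representatives $P$ (processed first) and $P'$ shared an edge $e$, both would traverse the same tree of $F$, so $h(P)$ and $h(P')$ are both ancestors of $e$ and hence comparable. Since $P'$ already existed and was hot-spot-free when $P$ was chosen (rerouting never creates new node-sequences, and hot spots only accumulate), the ``farthest from the root'' selection rule forces $h(P)$ to be at least as deep as $h(P')$; thus $h(P)$ lies on the tree path from $e$ up to $h(P')$, which is contained in $P'$---contradicting that $P'$ contains no hot spot at the moment it is selected. (The same argument covers edges incident on $R$, since such an edge still has its forest endpoint inside the tree.) Once representatives are edge-disjoint, the bound is immediate, with no induction over reroutes and no per-edge charging: every edge starts with flow at most $1$ by LP feasibility and receives a total increase of at most $1$ from the at most one representative through it, hence congestion at most~$2$.

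The substitute you propose for this step---charging the routes through a fixed edge $e$ to hot spots ``below'' and ``above'' $e$ and claiming each class carries total flow at most $1$---is not sound. After aggregation, the representative through $e$ can carry flow $1$, and its hot spot lies weakly above $e$ (it is the highest node of that route); meanwhile another route through $e$, carrying say $1/2$ unit of untouched LP flow, may also pass through a hot spot above $e$ (for instance through the representative's own hot spot). Then your ``above'' class carries $3/2 > 1$, even though the edge itself is fine. The correct dichotomy is not above/below hot spots but representative versus non-representative routes: the unique representative through $e$ carries at most $1$, and all other routes through $e$ are never increased, so their total never exceeds the initial LP load of $1$. In short, you correctly flagged the obstacle as ``the delicate point,'' but the charging scheme you outline does not resolve it; the ordering-plus-comparability argument above is what closes the gap.
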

\begin{proof}
In our algorithm, 
we increase the flow only along flow subpaths that are pairwise edge-disjoint.
To see this, consider two distinct flow subpaths $P$ and $P'$ on which we increase the flow.
Assume, without loss of generality, that $P$ was considered before $P'$ by the algorithm.
If there was an edge $e$ lying on~$P$ and $P'$, then both subpaths traverse the same tree in forest~$F$. 
Hence, the path from $e$ to $h(P')$ would visit $h(P)$, and~$h(P)$ would be an internal node of $P'$.  
This yields a contradiction, as~$h(P)$ was already marked as a hot spot when $P'$ was considered.  
This shows that we increased the flow along any edge by at most one unit, and, hence, $f$ has congestion at most~2.
\end{proof}

We now bound the congestion of the integral solution obtained by randomized rounding.  In the algorithm, we constructed a set $H$ of hot spots.  As a part of the analysis, we will now extend this set as follows.  We build a sub-forest $F'$ of $F$ consisting of all edges of $F$ that lie on a path connecting two hot spots.  Then we add to $H$ all nodes that have degree at least~3 in $F'$.  Since the number of nodes of degree~3 in any forest is at most its number of leaves and since every leaf of $F'$ is a hot spot, it follows that this can at most double the size of $H$.  Finally, we add the set $R$ of all feedback vertex nodes to $H$.
\begin{lemma}
\label{thm:onlyfewhotspots}
  The number $|H|$ of hot spots is $\mathcal O(kr^2)$.
\end{lemma}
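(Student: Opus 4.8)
The plan is to charge every node added to $H$ during the algorithm to the (unordered) pair $\{u,v\}\subseteq R$ of endpoints of the representative subpath $P$ whose highest node $h(P)$ was just marked. There are at most $\binom{r}{2}=\mathcal{O}(r^2)$ such pairs, so it suffices to show that each pair is charged $\mathcal{O}(k)$ times; afterwards I would argue that the two extension steps performed in the analysis preserve this bound.

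Fix a pair $\{u,v\}$ and let $W_{uv}$ denote the total weight of all subpaths in $\mathcal{J}$ having $u$ and $v$ as endpoints. I would first observe that $W_{uv}\le\sum_i x_i\le k$: every flow path is simple and hence has $u,v$ as consecutive $R$-nodes at most once, so each flow path contributes its weight to $W_{uv}$ at most once, while $\sum_i x_i\le k$ because each $x_i\le 1$. Next I would verify that $W_{uv}$ is invariant under all rerouting operations. A rerouting performed while processing $\{u,v\}$ merely shifts weight between two $\{u,v\}$-subpaths (the picked subpath and the representative $P$), and a rerouting performed for a different pair replaces only the corresponding portion of a flow path and thus never alters its $\{u,v\}$-subpath. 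Hence $W_{uv}$ is conserved throughout.

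The charges for $\{u,v\}$ come in two kinds, according to the stopping condition, and I would track the amount of \emph{free} $\{u,v\}$-flow, i.e.\ the weight of $\{u,v\}$-subpaths containing no node of $H$. Aggregation leaves this quantity unchanged, since it only moves free flow from $\mathcal{J}_{uv}$ onto copies of $P$, whereas marking a hot spot locks a portion of it; crucially, hot spots placed while processing \emph{other} pairs can lie on a $\{u,v\}$-subpath and so can only lock free $\{u,v\}$-flow, never create it. A charge of the first kind ($f(\mathcal{J}[P])=1$) locks exactly one full unit of free flow, so the number of such charges is at most $W_{uv}\le k$. A charge of the second kind ($\mathcal{J}_{uv}=\emptyset$) is the crux: immediately after it, every $\{u,v\}$-subpath contains a hot spot — those identical to $P$ now contain the freshly added $h(P)$, and every other one already did, since $\mathcal{J}_{uv}$ was empty — so no subpath with endpoints $\{u,v\}$ can ever again be selected as a representative. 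Each pair therefore incurs at most one charge of the second kind, giving at most $k+1$ charges per pair and $\mathcal{O}(kr^2)$ additions in total.

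Finally I would account for the extension of $H$ in the analysis. Adding all nodes of degree at least $3$ in the sub-forest $F'$ at most doubles $|H|$, since the leaves of $F'$ are hot spots and a forest has at most as many branch nodes as leaves; adjoining the $r$ vertices of $R$ adds only $\mathcal{O}(r)$. Hence the final set still has size $\mathcal{O}(kr^2)$. I expect the main obstacle to lie in the bookkeeping of the free-flow argument across reroutings of \emph{other} pairs — one must check that such reroutings and hot-spot placements can only lock, never generate, free $\{u,v\}$-flow — together with the verification that a second-kind stop permanently removes the pair from all further consideration.
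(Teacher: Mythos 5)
Your proposal is correct and follows essentially the same route as the paper's proof: fix a pair $\{u,v\}\subseteq R$, bound the hot spots charged to it by $\mathcal{O}(k)$ — your ``first-kind'' charges are exactly the paper's saturated subpaths (each locking one unit of the at most $k$ units of flow), and your ``second-kind'' charge is the paper's ``all except possibly one'' — then multiply by the $\mathcal{O}(r^2)$ pairs and account for the doubling plus $|R|$ from the analysis extension. The only difference is presentational: you make explicit the conservation of $W_{uv}$ and the monotonicity of free flow under reroutings, which the paper leaves implicit in the phrase ``due to the aggregation of flows in our algorithm.''
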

\begin{proof}
  It suffices to show that the number of hot spots added to $H$ by the algorithm is $\mathcal O(kr^2)$.  To this end, fix two nodes $u,v\in R$ and consider the set of flow subpaths $P$ with end nodes $u$ and $v$ for which we added~$h(P)$ to $H$.  Due to the aggregation of flows in our algorithm, all except possibly one of the subpaths are saturated, that is, they carry precisely one unit of flow. Since no two of these subpaths are contained in a same flow path of $f$ and since the flow value of $f$ is bounded from above by $k$, we added only~$\mathcal O(k)$ hot spots for the pair $u,v$.  Since there are at most $r^2$ pairs in $R$, the claim follows.
\end{proof}
  
\begin{definition}
  A hot spot $u\in H$ is \emph{good} if the congestion on any edge incident on $u$ is bounded by $c\cdot \frac{\log{kr}}{\log\log{kr}}$, where $c$ is a sufficiently large constant; otherwise, $u$ is \emph{bad}.
\end{definition}

\begin{lemma}\label{lem:hotspot-prob}
  Let $u\in H$ be a hot spot. Then the probability that $u$ is bad is at most $1/(k^2r^3)$.
\end{lemma}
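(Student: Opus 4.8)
The plan is to analyze each edge incident on $u$ separately with a Chernoff tail bound and then to combine these estimates over all such edges \emph{without} paying for their (possibly unbounded) number. First I would fix the relevant random variables. Write $y_e = \sum_{P:\, e\in P} f(P)$ for the flow on an edge $e$ after the rerouting; by Lemma~\ref{thm:congestiononfisatmost2} we have $y_e \le 2$. In the randomized rounding each pair $i$ is routed independently, and conditioned on being routed it selects a \emph{single} path, so the congestion $X_e$ on edge $e$ is a sum $X_e = \sum_{i=1}^k B_{i,e}$ of independent $\{0,1\}$ variables, where $B_{i,e}$ indicates that pair $i$ is routed along a path through $e$. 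A direct computation gives $\prob{B_{i,e}=1} = \sum_{P\in\mathcal P'(s_i,t_i):\, e\in P} f(P)$, and therefore $\expct{X_e} = y_e \le 2$.

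Next I would apply a multiplicative Chernoff bound in its large-deviation form: for the threshold $t = c\,\frac{\log{kr}}{\log\log{kr}}$, which satisfies $t \ge y_e$ for $kr$ large, one has
\begin{equation*}
  \prob{X_e \ge t} \le \left(\frac{e\,y_e}{t}\right)^{t} = \left(\frac{e}{t}\right)^{t} y_e^{\,t}.
\end{equation*}
The main obstacle is exactly that the number of edges incident on $u$ may be unbounded in terms of $k$ and $r$, so a naive union bound over these edges fails. The key idea to circumvent this is that the tail bound degrades \emph{polynomially} in $y_e$: since $y_e \le 2$ and $t \ge 1$, we have $y_e^{\,t} \le 2^{t-1}\,y_e$, which lets me trade the per-edge factor against the total flow incident on $u$ rather than against the number of incident edges.

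Finally I would sum over all edges incident on $u$ and use that this total flow is bounded by $2k$: every flow path through $u$ contributes $f(P)$ to at most two of its incident edges, so $\sum_{e\ni u} y_e \le 2\sum_{P} f(P) = 2\sum_i x_i \le 2k$. Combining the pieces yields
\begin{equation*}
  \prob{u \text{ is bad}} \le \sum_{e\ni u}\prob{X_e \ge t} \le \left(\frac{e}{t}\right)^{t}\sum_{e\ni u} y_e^{\,t} \le \left(\frac{e}{t}\right)^{t} 2^{t-1}\cdot 2k \le \left(\frac{2e}{t}\right)^{t} k.
\end{equation*}
It then remains to estimate $(2e/t)^t$. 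With $t = c\,\frac{\log{kr}}{\log\log{kr}}$ one has $t^{t} = (kr)^{\Theta(c)}$, because $t\ln t = \Theta(c\log{kr})$, whereas $(2e)^{t} = (kr)^{o(1)}$; hence $(2e/t)^t \le (kr)^{-c'}$ for a constant $c'$ that can be made arbitrarily large by choosing $c$ large enough, and the extra factor $k \le kr$ is absorbed. Choosing $c$ so that $c' \ge 4$ gives $\prob{u \text{ is bad}} \le (kr)^{-3} \le 1/(k^2 r^3)$, as claimed. The only delicate point I anticipate is the asymptotic estimate of $t^t$, which must be carried out carefully to confirm that the constant in the exponent scales linearly with $c$ and thus dominates the $2^t$ and $k$ factors.
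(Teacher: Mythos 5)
Your proposal is correct and follows essentially the same route as the paper: a per-edge Chernoff bound, the key observation that the tail probability can be made \emph{linear} in the flow $y_e$ on the edge (your $y_e^{\,t}\le 2^{t-1}y_e$ plays the role of the paper's $(f_i/2)^{2\delta}\le f_i/2$), and then a union bound against the total incident flow $\sum_{e\ni u} y_e\le 2k$ rather than against the unbounded number of incident edges. The only shared caveat, which the paper also resolves by taking $c$ and $k$ sufficiently large (solving \textsc{MaxEDP} exactly for constant $k$), is that the asymptotic estimate of $(2e/t)^t$ requires $kr$ to be large.
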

\begin{proof}
  Let $e_1=uv_1,\dots,e_{\ell}=uv_{\ell}$ be the edges incident on $u$ and let
  $f_i$ be the total flow on edge $uv_i$ for~$i=1,\dots,\ell$.  By Lemma~\ref{thm:congestiononfisatmost2}, we have that $f_i\leq 2$.
  Since any flow path visits at most two of the edges incident on~$u$, the total flow $\sum_{i=1}^{\ell}f_i$ on the edges incident on
  $u$ is at most $2k$.

  For any $i=1,\dots,\ell$, we have that
  $f_i=\sum_{P\colon P\ni{e_i}}f(P)$, where $P$ runs over the set of
  all paths connecting some terminal pair and containing $e_i$.  Let
  $f_{ij}=\sum_{P\in\mathcal{P}(s_j,t_j)\colon P\ni e_i}f(P)$ be the
  total amount of flow sent across~$e_i$ by terminal pair $(s_j,t_j)$.
  Recall that $x_j$ is the total flow sent for terminal pair $(s_j,t_j)$.  The
  probability that the randomized rounding procedure picks path $P$
  with $P\in\mathcal{P}(s_j,t_j)$ is precisely
  $x_j\cdot\frac{f(p)}{x_j}=f(p)$.  Given the disjointness of the
  respective events, the probability that pair $(s_j,t_j)$ routes a
  path across~$e_i$ is precisely~$f_{ij}$.  Let~$X_{ij}$ be the binary
  random variable indicating whether pair $(s_j,t_j)$ routes a path
  across~$e_i$. Then $\prob{X_{ij}=1}=f_{ij}$.  Let
  $X_i=\sum_{j}X_{ij}$ be the number of paths routed across $e_i$ by
  the algorithm.  By linearity of expectation, we have that
  $\expct{X_i}=\sum_{j}\expct{X_{ij}}=\sum_{j}f_{ij}=f_i$.

  Fix any edge~$e_i$. Set $\delta=c\cdot\frac{\log{kr}}{\log\log{kr}}$ and $\delta'=2\frac{\delta}{f_i}-1$.  Note that for fixed~$i$, the
  variables~$X_{ij}$ are independent.  Hence, by the Chernoff bound,
  we have that
    \begin{align*}
      \prob{X_i\geq c\cdot\frac{\log{kr}}{\log\log{kr}}} & \leq\prob{X_i\geq(1+\delta')f_i} < \left(\frac{e^{\delta'}}{(1+\delta')^{1+\delta'}}\right)^{f_i} \\
                                                         & \leq \left(\frac{f_i}{2}\right)^{2\delta}\cdot \left(\frac{\delta}{e}\right)^{-2\delta}
                                                         \leq f_i e^{-c'\log\log{kr}\cdot\frac{\log{kr}}{\log\log{kr}}}
                                                           \leq \frac{f_i}{2k^3r^3}\, .
    \end{align*}
Here, we use that $f_i\leq 2$ for the second last inequality and for the last inequality we pick $c'$ sufficiently large by making $c$ and $k$ sufficiently large. (Note that {\sc MaxEDP} can be solved efficiently for constant $k$.)

Now, using the union bound, we can infer that the probability that any of the edges incident on $u$ carries more than $\delta$ paths is at most $\sum_{i}f_i/(2k^3r^3)\leq (2k)/(2k^3r^3)=1/(k^2r^3)$.
\end{proof}

\begin{lemma}\label{lem:hot-spot-congestion}
  Assume that every hot spot is good. Then the congestion on any edge is bounded by $2c\frac{\log{kr}}{\log\log{kr}}$.
\end{lemma}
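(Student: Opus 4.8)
The plan is to bound the congestion on every edge of $G$ by treating two kinds of edges separately. An edge either is incident on a node of $R$ or is an edge of the forest $F = G - R$. For the first kind there is nothing to do: since $R \subseteq H$ and every hot spot is good by assumption, any edge incident on $R$ is incident on a good hot spot and hence already carries at most $c\frac{\log kr}{\log\log kr} \le 2c\frac{\log kr}{\log\log kr}$ paths. So the whole work lies in bounding the congestion on the forest edges, and there it suffices to consider edges that actually carry flow.

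First I would record the structural consequence of having added to $H$ both the set $R$ and all degree-$\ge 3$ nodes of $F'$ (whose leaves are already hot spots). Along any tree of $F'$, the hot spots partition the remaining edges into maximal \emph{segments} $\sigma$ whose internal nodes all have degree exactly $2$ in $F'$ and are therefore non-hot-spots; each such segment runs between two hot spots $h_1, h_2$, and its first and last edges $e_1$ and $e_\ell$ are incident on $h_1$ and $h_2$, respectively. I would then fix a flow-carrying forest edge $e$ lying on such a segment $\sigma$ and aim to prove the single clean statement: \emph{every flow subpath passing through $e$ traverses $e_1$ or $e_\ell$.}

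The heart of the argument is a separation claim. Let $C$ be the connected component of $F - \{h_1, h_2\}$ that contains the interior of $\sigma$ (and hence $e$). I claim $C$ contains no hot spot at all: if some subtree hanging off an interior node $w_i$ of $\sigma$ contained a hot spot, then the tree-path from that hot spot to, say, $h_1$ would place a third $F'$-edge at $w_i$, making $w_i$ a degree-$\ge 3$ node of $F'$ and thus itself a hot spot, contradicting that $w_i$ is interior to $\sigma$. Now take any subpath $P'$ through $e$. By the invariant established by the algorithm, $P'$ contains a hot spot $w$, and since all hot spots created by the algorithm lie in $V(F)$, this $w$ lies on the forest portion of $P'$; in particular $w \notin C$. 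Consequently the unique forest path from $e$ to $w$, which $P'$ must follow as a simple path in a tree, leaves $C$, and it can only do so through $h_1$ or $h_2$. Because in a forest a connected region attaches to each outside vertex by at most one edge, leaving $C$ at $h_1$ forces $P'$ across $e_1$ and leaving $C$ at $h_2$ forces it across $e_\ell$. This proves the claim, and hence the congestion on $e$ is at most the sum of the congestions on $e_1$ and $e_\ell$, which is at most $2c\frac{\log kr}{\log\log kr}$ since $h_1$ and $h_2$ are good.

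The step I expect to be the main obstacle is precisely this separation claim: making rigorous that the interior component $C$ is hot-spot-free (this is exactly what the seemingly technical augmentation of $H$ by the degree-$\ge 3$ nodes of $F'$ buys us) and that a subpath cannot be ``trapped'' inside a hot-spot-free region, which is where the algorithm's invariant that every subpath contains a hot spot is used crucially, together with the fact that these hot spots live in the forest rather than in $R$. A secondary point to pin down is that every flow-carrying forest edge indeed lies on a segment of $F'$, so that the segment/bounding-hot-spot dichotomy applies to it; I would verify this directly from the definition of $F'$ and the placement of hot spots by the algorithm. Once the claim is in place, the factor $2$, and hence the stated bound, is immediate.
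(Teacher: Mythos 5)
Your treatment of the two cases you do handle is fine: edges incident on $R$ are immediate, and for an edge on a segment of $F'$ your separation claim (the interior component of $F-\{h_1,h_2\}$ is hot-spot-free, and in a forest it attaches to each of $h_1,h_2$ by exactly one edge) is correct and gives the factor $2$ cleanly. The genuine gap is the step you deferred as ``secondary'': it is \emph{false} that every flow-carrying forest edge not incident on a hot spot lies on a segment of $F'$. By construction, $F'$ contains only edges lying on a path in $F$ \emph{between two hot spots}, so any flow-carrying edge $e$ of a tree $T$ all of whose hot spots lie on one side of $e$ belongs to no such path and hence to no segment. The algorithm's invariant only guarantees that each subpath between consecutive $R$-nodes contains \emph{some} hot spot, not that every edge of it has hot spots on both sides. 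Concretely, suppose some subpaths have the form $r \to v \to u \to j \to a_1 \to z_1 \to r'$ and others the form $r'' \to v \to u \to j \to a_2 \to z_2 \to r'''$ (with $r,r',r'',r'''\in R$), where $z_1,z_2$ are the only hot spots of the tree. Then $F'$ consists of just the four edges $z_1a_1$, $a_1j$, $ja_2$, $a_2z_2$ (and $j$ has degree $2$ in $F'$, so it never becomes a hot spot), while the flow-carrying edges $vu$ and $uj$ are incident on no hot spot and lie outside $F'$. Your proof says nothing about them.

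This missing case is where the real content of the lemma sits: paths through such an edge can fan out towards \emph{different} hot spots ($z_1$ and $z_2$ above), so one must argue that at most two distinct ``first hot spots'' are reachable from $e$ without crossing another hot spot --- three or more would force a branching node of degree at least $3$ in $F'$, which would itself have been added as a hot spot. This junction-counting argument is exactly the heart of the paper's proof (its notion of hot spots \emph{direct} to the endpoints $u$ and $v$ of $e$, of which there can be at most two in total, each reached through a single edge incident on a good hot spot). Your approach can in fact be repaired along your own lines by replacing segments of $F'$ with components of $F-H$: the component $D$ containing $e$ is hot-spot-free by definition; each hot spot adjacent to $D$ attaches by exactly one edge (forest); and $D$ is adjacent to at most two hot spots, since a third would create a degree-$3$ node of $F'$ inside $D$, hence a hot spot inside $D$, a contradiction; every routed path through $e$ must then leave $D$ through one of these at most two attaching edges. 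As written, however, the case analysis is incomplete, and the omitted case is not a technicality but the essential one.
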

\begin{proof}
  Consider an arbitrary edge $e=uv$ that is not incident on any hot
  spot.  In particular, this means that~$e$ lies in the forest
  $F=G-R$.  A hot spot $z$ in $F$ is called \emph{direct} to $u$ (or
  $v$) if the path in $F$ from $z$ to $u$ (or $v$) neither contains
  $e$ nor any hot spot other than $z$.

  Now observe that there can be only one hot spot $z$ direct to $u$
  and only one hot spot $z'$ direct to $v$.  If there was a second hot
  spot $z''\neq z$ direct to $u$ then there would have to be yet
  another hot spot at the node where the path $P_z$ from $z$ to $u$
  joins the path from $z''$ to $u$ contradicting the choice of $z$.
  Let $P_{z'}$ be the path from $z'$ to $v$ in $F$.  Moreover, let
  $e_z$ be the edge incident on $z$ on path $P_z$ and let $e_{z'}$ be
  the edge incident on~$z'$ on path $P_{z'}$.

  Now let $P$ be an arbitrary path that is routed by our algorithm and
  that traverses $e$.  It must visit a hot spot.  If $P$ visited neither
  $z$ nor $z'$, then $P$ would contain a hot spot direct to $u$
  or to $v$ that is distinct from~$z$ and $z'$---a contradiction.  Therefore, $P$
  contains $e_z$ or $e_z'$.  The claim now follows from the
  fact that this holds for any path traversing $e$, that $z$ and $z'$
  are good, and that therefore at most
  $2c\frac{\log{kr}}{\log\log{kr}}$ paths visit $e_z$ or~$e_z'$.
\end{proof}

\begin{theorem}
  The algorithm from Sect.~\ref{sec:rand-rounding} produces---with constant probability---a routing with $\Omega(\opt)$ paths, such that the congestion is $\mathcal O\left(\frac{\log{kr}}{\log\log{kr}}\right)$.
\end{theorem}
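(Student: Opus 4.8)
The plan is to assemble the lemmas of this section into a single statement by showing that two favorable events---routing enough pairs, and keeping every hot spot good---occur \emph{simultaneously} with constant probability. The first event, call it $A$, is that the randomized rounding routes at least $\opt/2$ pairs. As noted at the start of the analysis, the expected number of routed pairs is $\sum_i x_i \ge \opt$, and since the $k$ rounding decisions are independent, a Chernoff bound gives $\prob{\bar A} \le e^{-\opt/8} < 1/2$ whenever $\opt > 8$. The complementary regime $\opt \le 8$ (equivalently, constant $k$) is handled directly, since {\sc MaxEDP} is solvable efficiently there; so throughout I may assume $\opt$ and $kr$ are large.

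The second event, call it $B$, is that every hot spot in $H$ is good. Here I would take a union bound over $H$. By Lemma~\ref{thm:onlyfewhotspots} there are only $|H| = \mathcal O(kr^2)$ hot spots, and by Lemma~\ref{lem:hotspot-prob} each of them is bad with probability at most $1/(k^2r^3)$. Hence
\[
  \prob{\bar B} \;\le\; |H|\cdot \frac{1}{k^2 r^3} \;=\; \mathcal O\!\left(\frac{kr^2}{k^2 r^3}\right) \;=\; \mathcal O\!\left(\frac{1}{kr}\right).
\]
The entire point of defining hot spots this way is that their number is bounded polynomially in $k$ and $r$ rather than in $n$, so the union bound stays inverse-polynomial even though the graph may carry arbitrarily many edges. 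Conditioned on $B$, Lemma~\ref{lem:hot-spot-congestion} certifies that the congestion on \emph{every} edge---including edges far from any hot spot---is at most $2c\frac{\log kr}{\log\log kr} = \mathcal O\!\left(\frac{\log kr}{\log\log kr}\right)$.

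Finally, I would combine the two events by a union bound on their complements:
\[
  \prob{A \cap B} \;\ge\; 1 - \prob{\bar A} - \prob{\bar B} \;>\; 1 - \tfrac12 - \mathcal O\!\left(\tfrac{1}{kr}\right),
\]
which is bounded below by a positive constant once $kr$ is large enough. On the event $A \cap B$ the algorithm routes $\Omega(\opt)$ pairs with congestion $\mathcal O\!\left(\frac{\log kr}{\log\log kr}\right)$, which is exactly the claim.

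I do not expect a genuine obstacle at this final step: all the difficulty has been front-loaded into Lemma~\ref{thm:onlyfewhotspots} (the hot-spot construction that keeps $|H|$ bounded in $k$ and $r$) and Lemma~\ref{lem:hotspot-prob} (the refined Chernoff estimate that ties an edge's failure probability to its flow, summing to an inverse-polynomial bound per hot spot). The only points that require real care are conceptual rather than computational: the cardinality guarantee and the congestion guarantee must be shown to hold \emph{together}, which is why I union-bound the two complements instead of analyzing them in isolation; and the small-parameter boundary cases ($\opt$ or $kr$ small) must be dispatched separately, since the asymptotic probability bounds alone would not deliver a constant lower bound there.
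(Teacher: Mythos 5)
Your proposal is correct and follows essentially the same route as the paper's own proof: both combine the Chernoff bound on the number of routed pairs (failure probability at most $1/2$ for $\opt>8$), the union bound over the $\mathcal O(kr^2)$ hot spots each bad with probability at most $1/(k^2r^3)$ (giving failure probability $\mathcal O(1/(kr))$), and Lemma~\ref{lem:hot-spot-congestion} to transfer goodness of all hot spots into a global congestion bound, concluding with a union bound over the two bad events. Your treatment is slightly more explicit than the paper's about the boundary cases (small $\opt$ or small $kr$) and about stating the final union bound formally, but the substance is identical.
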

\begin{proof}
  As argued above, we route less than $\opt/2$ paths with probability at most $1/2$.  By Lemma~\ref{thm:onlyfewhotspots}, there are~$\mathcal O(kr^2)$ hotspots.  The probability that at least one of these hot spots is bad is ${\mathcal O(kr^2/(k^2r^3))=\mathcal O(1/(kr))}$, by Lemma~\ref{lem:hotspot-prob}.  Hence, with constant probability, we route at least $\opt/2$ pairs with congestion at most~$2c\frac{\log{kr}}{\log\log{kr}}$, by Lemma~\ref{lem:hot-spot-congestion}.
\end{proof}

\section{Refined Approximation Bound for MaxEDP}\label{sec:approx-algo-maxedp}
In this section, we provide an improved approximation guarantee for {\sc MaxEDP} \emph{without} congestion, thereby proving Theorem~\ref{thm:fvs}. (In contrast to the previous section, we do not assume here that all terminals are contained in the feedback vertex set.)

\subsection{Irreducible Routings with Low Congestion}\label{sec:double-path-cover}

We first develop the concept of \emph{irreducible routings with low congestion}, which is (besides Theorem~\ref{thm:highprob}) a key ingredient of our strengthened bound on the approximability of \textsc{MaxEDP} based on the feedback vertex number.

Consider any multigraph~$G$ and any set~$\mathcal P$ of (not necessarily simple) paths in~$G$ with congestion~$c$. 
We say that an edge~$e$ is \emph{redundant in $\mathcal{P}$} if there is an edge~$e'\neq e$ such that the set of paths in $\mathcal P$ \emph{covering} (containing) $e$ is a subset of the set of paths in~$\mathcal P$ covering~$e'$.

\begin{definition}
  Set $\mathcal P$ is called an \emph{irreducible routing
    with congestion $c$} if each edge belongs to at most $c$
  paths of $\mathcal P$ and there is no edge redundant in~$\mathcal P$.
\end{definition}
In contrast to a feasible routing of an {\sc MaxEDP} instance, we do not require an irreducible routing to connect a set of terminal pairs.
If there is an edge $e$ redundant in $\mathcal{P}$, we can apply the following \emph{reduction rule}:
We contract~$e$ in $G$ and we contract~$e$ in every path of $\mathcal P$ that covers~$e$.
By this, we obtain a minor~$G'$ of~$G$ and a set~$\mathcal P'$ of paths that consists of all the contracted paths and of all paths in $\mathcal P$ that were not contracted.
Thus, there is a one-to-one correspondence between the paths in~$\mathcal P$ and~$\mathcal P'$ .

We make the following observation about $\mathcal{P}$ and $\mathcal{P}'$.
\begin{observation}\label{obs:redToDPC}
Any subset of paths in $\mathcal P'$ is edge-disjoint in $G'$ if and only if the corresponding subset of paths in $\mathcal P$ is edge-disjoint in~$G$.
\end{observation}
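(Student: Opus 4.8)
The plan is to exploit the structure of the reduction rule directly: the contraction of a redundant edge~$e=xy$ is an extremely mild operation because, by the very definition of redundancy, every path covering~$e$ also covers the witnessing edge~$e'$. First I would fix notation for the bijection between $\mathcal P$ and $\mathcal P'$. For a path $P\in\mathcal P$, write $\phi(P)\in\mathcal P'$ for its image under the reduction rule: if $P$ covers~$e$, then $\phi(P)$ is obtained from $P$ by contracting~$e$ (i.e.\ identifying the two endpoints $x,y$ of~$e$ into the single node of~$G'$ corresponding to the contracted edge); if $P$ does not cover~$e$, then $\phi(P)=P$ is unchanged. The correspondence on edges is likewise explicit: every edge $g\neq e$ of~$G$ survives as an edge $\psi(g)$ of~$G'$, and distinct edges other than~$e$ remain distinct in~$G'$ since contracting a single edge creates no parallel identifications among the remaining edges (we work in the multigraph setting, so no parallel edges are silently merged). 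The key point to record first is that for any edge $g\neq e$ and any path $P\in\mathcal P$, the path $\phi(P)$ covers $\psi(g)$ in~$G'$ if and only if $P$ covers $g$ in~$G$; this is immediate because contraction of~$e$ neither adds nor removes occurrences of any edge other than~$e$ itself.

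With this setup the two directions follow by a short case analysis on whether the witnessing interaction happens at the edge~$e$. Fix a subset $\mathcal Q\subseteq\mathcal P$ and let $\mathcal Q'=\phi(\mathcal Q)$ be the corresponding subset of~$\mathcal P'$. Suppose first that $\mathcal Q$ is edge-disjoint in~$G$; I claim $\mathcal Q'$ is edge-disjoint in~$G'$. If two paths $\phi(P),\phi(Q)\in\mathcal Q'$ shared an edge of~$G'$, that edge would be $\psi(g)$ for some $g\neq e$, and by the correspondence above both $P$ and $Q$ would cover~$g$ in~$G$, contradicting edge-disjointness of $\mathcal Q$. Conversely, suppose $\mathcal Q'$ is edge-disjoint in~$G'$ but, for contradiction, some edge of~$G$ is covered by two paths $P,Q\in\mathcal Q$. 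That common edge is either~$e$ or some $g\neq e$. In the latter case $\phi(P)$ and $\phi(Q)$ both cover $\psi(g)$, contradicting disjointness of $\mathcal Q'$. In the former case both $P$ and $Q$ cover~$e$; but then, by the redundancy of~$e$ witnessed by~$e'$, both $P$ and $Q$ also cover $e'\neq e$, and since $e'$ survives as $\psi(e')$ in~$G'$ with both $\phi(P),\phi(Q)$ covering it, we again contradict edge-disjointness of $\mathcal Q'$. This last step is exactly where the hypothesis of redundancy (rather than contraction of an arbitrary edge) is essential, so I expect it to be the crux of the argument.

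The main obstacle I anticipate is not conceptual difficulty but bookkeeping precision in the multigraph setting: one must ensure that contracting~$e$ does not accidentally identify two distinct surviving edges or alter the covering relation for edges other than~$e$, and one must treat the degenerate possibility that a path traverses~$e$ (or~$e'$) more than once, since the paths need not be simple. These are handled by working at the level of edge-occurrences and observing that the contraction map is a bijection on the occurrences of every edge $g\neq e$; the redundancy hypothesis guarantees that whenever the conflict would have been at~$e$, a parallel conflict already exists at the surviving edge~$e'$. Assembling the two directions then yields the stated equivalence, and because the reduction rule applies equally after repeated contractions (each time to a redundant edge of the current routing), the equivalence composes, which is what we will later invoke when reducing a routing to an irreducible one.
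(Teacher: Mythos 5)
Your proof is correct and coincides with the argument the paper leaves implicit: Observation~\ref{obs:redToDPC} is stated without proof precisely because contraction of a redundant edge preserves edge-conflicts---conflicts at any surviving edge transfer back and forth unchanged, and a conflict at the contracted edge $e$ is mirrored at the witnessing edge $e'$ by redundancy, which is exactly your case analysis and its crux. One small caveat: your claim that every edge $g \neq e$ survives contraction (with a bijection on occurrences) fails when $g$ is parallel to $e$, since such an edge becomes a loop and is deleted under the paper's definition of minor; this corner case is not covered by the paper's statement either, and it never arises in the paper's application of the rule, where the contracted edge always lies inside the forest $G - R$ and hence has no parallel copies.
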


Since the application of the reduction rule strictly decreases the number of redundant edges, an iterative application of this rule yields an irreducible routing on a minor of the original graph.
\begin{theorem}\label{thm:double-path-covers}
  Let $\mathcal G$ be a minor-closed class of multigraphs and let $p_{\mathcal G} > 0$.
  If for each graph $G\in\mathcal G$ and every non-empty irreducible routing $\mathcal S$ of $G$ with congestion~$c$ there exists a path in~$\mathcal S$ of length at most~$p_{\mathcal G}$, then the average length of the paths in $\mathcal S$ is at most $c\cdot p_{\mathcal G}$.
\end{theorem}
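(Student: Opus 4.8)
The plan is to reduce the statement to a purely combinatorial estimate on the number of covered edges, and then to establish that estimate by peeling off short paths one at a time. Write $E'$ for the set of edges of $G$ that are covered by at least one path of $\mathcal S$, and for $e\in E'$ let $\mu(e)$ denote the number of paths of $\mathcal S$ through $e$. The elementary double-counting identity $\sum_{P\in\mathcal S}|P|=\sum_{e\in E'}\mu(e)$ holds, and since $\mathcal S$ has congestion $c$ we have $\mu(e)\le c$ for every $e$, so $\sum_{P\in\mathcal S}|P|\le c\cdot|E'|$. Consequently the theorem follows once I show the bound $|E'|\le p_{\mathcal G}\cdot|\mathcal S|$ for every non-empty irreducible routing $\mathcal S$ of congestion $c$ on a graph of $\mathcal G$: dividing $\sum_{P}|P|\le c|E'|\le c\,p_{\mathcal G}|\mathcal S|$ by $|\mathcal S|$ gives the desired bound $c\cdot p_{\mathcal G}$ on the average length.

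To prove $|E'|\le p_{\mathcal G}\cdot|\mathcal S|$, I would argue by iterated peeling. Set $G_0=G$, $\mathcal S_0=\mathcal S$. At step $i$, the routing $\mathcal S_{i-1}$ is irreducible with congestion at most $c$ on a minor $G_{i-1}$ of $G$; since $\mathcal G$ is minor-closed, $G_{i-1}\in\mathcal G$, so the hypothesis supplies a path $P_i\in\mathcal S_{i-1}$ of length at most $p_{\mathcal G}$. I remove $P_i$ from the routing and then exhaustively apply the reduction rule to $\mathcal S_{i-1}\setminus\{P_i\}$, obtaining an irreducible routing $\mathcal S_i$ on a minor $G_i$. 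By Observation~\ref{obs:redToDPC} and the one-to-one correspondence of paths under the reduction rule, $|\mathcal S_i|=|\mathcal S_{i-1}|-1$, and the congestion never increases. After $s=|\mathcal S|$ steps the routing is empty and covers no edges. The main step is the following \emph{key claim}: removing a single path $P$ from an irreducible routing and re-reducing decreases the number of covered edges by at most $|P|$. Granting this, and writing $E'_i$ for the covered edges of $\mathcal S_i$, the drop at each step is $|E'_{i-1}|-|E'_i|\le|P_i|\le p_{\mathcal G}$; telescoping over all $s$ steps gives $|E'|=\sum_{i=1}^s\bigl(|E'_{i-1}|-|E'_i|\bigr)\le s\cdot p_{\mathcal G}=p_{\mathcal G}\cdot|\mathcal S|$, as required.

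The heart of the argument — and the step I expect to be the main obstacle — is the key claim. For each edge $e$, let $\mathrm{cov}(e)$ be the set of paths of the current routing through $e$; irreducibility of $\mathcal S_{i-1}$ means $\mathrm{cov}(e)\not\subseteq\mathrm{cov}(e')$ for all distinct covered $e,e'$. After deleting $P$, the cover-set of an edge $e$ changes precisely when $P\in\mathrm{cov}(e)$, that is, exactly when $e$ lies on $P$. Hence an edge can become \emph{uncovered} only if $P$ was its unique cover (so $e\in P$), and an edge can become \emph{newly redundant} only if its cover-set shrank (so again $e\in P$). The delicate point is to show that the subsequent contractions do not trigger a cascade of further edge losses off of $P$: contracting a redundant edge leaves the cover-set of every surviving edge unchanged, since contraction only shortens the paths through the contracted edge without altering which paths traverse any other edge. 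Therefore redundancy status is invariant under these contractions, no edge outside $P$ is ever deleted or contracted, and the total number of covered edges lost is at most $|P|$. I would take care here with the multigraph subtleties — parallel edges collapsing into loops under contraction — verifying that such collapses can only remove edges of $P$ and never inflate the count, so that the bound $|P|$ on the net loss of covered edges is preserved. Combining the key claim with the reduction of the first paragraph then yields the theorem.
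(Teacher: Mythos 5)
Your proposal is correct, and it rests on the same basic induction as the paper's proof: repeatedly invoke the hypothesis to extract a path of length at most $p_{\mathcal G}$, stay inside $\mathcal G$ via minor-closedness, observe that irreducibility and congestion are preserved, and telescope over at most $|\mathcal S|$ peeling steps. Where you genuinely diverge is in the quantity being telescoped and in the peeling operation itself. The paper contracts \emph{all} edges of the short path $P_0$ (in the graph and in every path sharing an edge with $P_0$), discards $P_0$ and the degenerated paths, and telescopes \emph{total path length}: each of the at most $p_{\mathcal G}$ contracted edges lies on at most $c$ paths, so each step loses at most $c\cdot p_{\mathcal G}$ units of length, and the bound follows with no further machinery. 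You instead first double-count to reduce the theorem to the bound $|E'|\le p_{\mathcal G}\cdot|\mathcal S|$ on covered edges, and your peeling step removes $P$ and contracts only the edges that thereby become redundant; this forces you to prove the key claim that newly uncovered or newly redundant edges all lie on $P$ and that contraction cannot cascade, which you do correctly via the invariance of cover-sets of surviving edges. What your route buys: that cover-set invariance is exactly the fact the paper needs but merely asserts (``Note that $\mathcal S'$ is an irreducible routing of $G'$ with congestion~$c$''), so your write-up supplies the justification the paper leaves implicit. What the paper's route buys: by accounting in length rather than in covered edges, it needs neither the double-counting reduction nor the key claim, and the whole proof fits in a few lines.

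One caveat, shared with the paper: in a general minor-closed class of multigraphs, contracting an edge can turn a \emph{covered} parallel edge into a loop, which the paper's definition of minor deletes; such an edge need not lie on $P$, so your assertion that ``such collapses can only remove edges of $P$'' cannot be verified in that generality. The paper's own per-step accounting breaks in the identical scenario, and the issue is vacuous in every application in the paper (forests, where contractions never create loops), so this is a blemish inherited from the statement's multigraph generality rather than a gap specific to your argument.
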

\begin{proof}
Take a path~$P_0$ of length at most $p_{\mathcal G}$. Contract all edges of $P_0$ in $G$ and obtain a minor $G'\in\mathcal G$ of $G$.  For each path in $\mathcal S$ contract all edges shared with $P_0$ to obtain a set~$\mathcal S'$ of paths. 
Remove $P_0$ along with all degenerated paths from~$\mathcal S'$, thus $|\mathcal S'| < |\mathcal S|$.   
Note that~$\mathcal S'$ is an irreducible routing of $G'$ with congestion~$c$.
We repeat this reduction procedure recursively on~$G'$ and $S'$ until $S'$ is empty which happens after at most $|\mathcal{S}|$ steps.
At each step we decrease the total path length by at most $c\cdot p_{\mathcal G}$. Hence, the total length of paths in $\mathcal{S}$ is at most~$|\mathcal{S}| \cdot c \cdot p_{\mathcal G}$.
\end{proof}

As a consequence of Theorem~\ref{thm:double-path-covers}, we get the following result for forests.
\begin{lemma}\label{lem:path-length-forest}
  Let $F$ be a forest and let $\mathcal{S}$ be a non-empty irreducible routing of $F$ with congestion~$c$.  Then the average path length in~$\mathcal{S}$ is at most $2c$. 
\end{lemma}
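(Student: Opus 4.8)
The plan is to apply Theorem~\ref{thm:double-path-covers} to the class $\mathcal G$ of all forests. This class is minor-closed, since contracting an edge of a forest (and deleting loops) again yields a forest. Hence, to invoke the theorem it suffices to establish the hypothesis with the constant $p_{\mathcal G}=2$, namely: every non-empty irreducible routing $\mathcal S$ of a forest with congestion $c$ contains at least one path of length at most $2$. Granting this, Theorem~\ref{thm:double-path-covers} immediately yields that the average path length in $\mathcal S$ is at most $c\cdot p_{\mathcal G}=2c$, which is exactly Lemma~\ref{lem:path-length-forest}. So the whole proof reduces to the existence of one short path.

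To find such a short path I would argue by contradiction, assuming that \emph{every} path in $\mathcal S$ has length at least $3$. First I restrict attention to the subforest $F'\subseteq F$ consisting of all edges covered by at least one path of $\mathcal S$; since $\mathcal S$ is non-empty, so is $F'$, and each of its connected components is a tree whose every edge is covered. I would then pick a component $T$ and a \emph{longest} path $\pi=v,w_0,w_1,\dots$ in $T$. Under the standing assumption, some $\mathcal S$-path lies in $T$ and has length at least $3$, so $T$ has diameter at least $3$ and thus $\pi$ has length at least $3$; in particular the endpoint $v$ is a leaf of $T$ and the nodes $w_0,w_1$ exist and are distinct from $v$. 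The crucial structural fact I would prove next is a standard longest-path exchange argument: every neighbour of $w_0$ other than $w_1$ is a leaf of $T$ — otherwise a non-leaf neighbour $u$ would have a further neighbour $u'$, and the path $u',u,w_0,w_1,\dots$ to the far endpoint of $\pi$ would be strictly longer than $\pi$, a contradiction.

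With this in hand I would finish by showing the leaf edge $e_0=vw_0$ is redundant, contradicting irreducibility. Any path of $\mathcal S$ covering $e_0$ must have $v$ as an endpoint (as $v$ is a leaf), enter $w_0$, and then either stop at $w_0$ (length $1$), turn into a leaf neighbour of $w_0$ (length $2$), or continue along $w_0w_1$. The first two options are excluded by the assumption that all lengths are at least $3$; hence every path covering $e_0$ also covers the edge $w_0w_1\neq e_0$. Thus the set of paths covering $e_0$ is contained in the set covering $w_0w_1$, so $e_0$ is redundant — contradicting that $\mathcal S$ is irreducible. This contradiction forces the existence of a path of length at most $2$, completing the reduction.

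The step I expect to be the main obstacle is verifying the redundancy cleanly while handling the degenerate small-diameter cases: one must ensure $w_1$ genuinely exists and is distinct from $v$ (so that $w_0w_1\neq e_0$), and confirm that the only surviving way for a path through $e_0$ to avoid length $\le 2$ is to follow $\pi$ through $w_0w_1$. The longest-path exchange argument is what rules out all other continuations at $w_0$, so getting that argument exactly right — together with the observation that a component of diameter at most $2$ would itself force a path of length at most $2$ — is the heart of the matter. I would also briefly note that in a forest the routed paths may be taken to be simple, so that ``covering an edge'' has the intended meaning.
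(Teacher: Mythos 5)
Your proposal is correct and takes essentially the same approach as the paper: both reduce the lemma, via Theorem~\ref{thm:double-path-covers} with $p_{\mathcal G}=2$ (forests being minor-closed), to exhibiting a single path of length at most~$2$. Your contradiction argument at the endpoint of a longest path is just the contrapositive of the paper's direct argument at a deepest leaf of a rooted tree: in both cases the key facts are that all edges at the leaf's parent other than the one leading ``up'' end in leaves, and that non-redundancy of the leaf edge against that upward edge forces a path of length at most~$2$.
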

\begin{proof}
 We show that $\mathcal{S}$ contains a path of length as most~$2$. The lemma follows immediately by applying Theorem~\ref{thm:double-path-covers}.

  Take any tree in $F$, root it with any node and consider a leaf~$v$ of maximum depth. 
  Let~$e_1$ and $e_2$ be the first two edges on the path from $v$ to the root.
  By definition of irreducible routing, 
  the set of all paths covering $e_1$ is not a subset of the paths covering~$e_2$, hence, 
  $e_1$ is covered by a path which does not cover~$e_2$.
  Since all other edges incident to~$e_1$ end in a leaf, this path has length at most~$2$.
\end{proof}

Note that the bound provided in Lemma~\ref{lem:path-length-forest} is actually tight up to a constant. 
Let $c\geq 1$ be an arbitary integer. Consider a graph that is a path of length~$c-1$ with a star of $c-1$ leafs attached to one of its end points.  
The $c-1$ many paths of length~$c$ together with the $2c-2$ many paths of length~$1$ form an irreducible routing with congestion $c$. The average path length is $((c-1)c+(2c-2))/(3c-3)=(c+2)/3$.

\subsection{Approximation Algorithm}\label{sec:bound-size-feedb}
Consider an instance $(G,\mathcal M)$ of \textsc{MaxEDP}, and let $r$ be the size of a feedback vertex set $R$ in $G$.  Using our result of Sect.~\ref{sec:bi-crit-appr}, we can efficiently compute a routing $\mathcal{P}$ with congestion~$c\coloneqq \mathcal O\left(\frac{\log{kr}}{\log\log{kr}}\right)$ containing~$\Omega(\opt)$ paths.

Below we argue how to use the routing $\mathcal P$ to obtain a feasible routing of cardinality $\Omega\left(|\mathcal{P}|/(c^{1.5} \sqrt{r})\right)$, which yields an overall approximation ratio of~$\mathcal{O}\left(\sqrt{r}\cdot\log^{1.5} rk\right)$; that will prove Theorem~\ref{thm:fvs}.

Let $r'=\sqrt{r/c}$. We distinguish the following cases. 

\medskip
\noindent
\textbf{Case 1:} At least half of the paths in $\mathcal P$ visit at most $r'$ nodes of the feedback vertex set $R$. 
Let $\overline{\mathcal P}$ be the subset of these paths.
As long as there is an edge~$e$ not adjacent to~$R$ that is redundant in~$\mathcal P'$,
we iteratively apply the reduction rule from Sect.~\ref{sec:double-path-cover} on~$e$.
Let $G'$ be the obtained minor of $G$ with forest $F'=G'-R$, and let $\mathcal P'$ be the obtained set of (not necessarily simple) paths corresponding to~$\overline{ \mathcal P}$.
By Observation~\ref{obs:redToDPC}, it suffices to show that there is a subset ${\mathcal{P}}_0'\subseteq{\mathcal{P}'}$ of pairwise edge-disjoint paths of size~$|\mathcal P_0| = \Omega\left(|\mathcal{P}|/( c r') \right)$
in order to obtain a feasible routing for~$(G,\mathcal M)$ of size~$\Omega\left(|\mathcal{P}|/( c r') \right)$.

To obtain~$\mathcal{P}_0'$,  
we first bound the total path length in~$\mathcal P'$.
Removing~${R}$ from ${G'}$ ``decomposes'' the set ${P}'$ into a set~${\mathcal{S}:=\{S\textrm{ is a connected component of }P\cap F\mid P\in\mathcal{P}'\,\}}$ of 
subpaths lying in $F'$. 
Observe that $\mathcal{S}$ is an irreducible set of~$F'$ with congestion~$c$, as the reduction rule is not applicable anymore. 
(Note that a single path in~$\mathcal{P'}$ may lead to many paths in the cover~$\mathcal{S}$ which are considered distinct.) 
Thus, by Lemma~\ref{lem:path-length-forest}, the average path length in $\mathcal{S}$ is at most~$2c$.

Let $P$ be an arbitrary path in $\mathcal{P}'$.  Each edge on $P$ that is \emph{not} in a subpath in~$\mathcal{S}$ is incident on a node in~${R}$, and each node in ${R}$ is incident on at most two edges in~$P$.  Together with the fact that $P$ visits at most~$r'$ nodes in $R$ and 
that the average length of the subpaths in $\mathcal{S}$ is at most~$2c$,
we can upper bound the total path length $\sum_{P\in\mathcal{P'}}|P|$ by~$|\mathcal{P'}| r' (2c+2)$.
Let $\mathcal P''$ be the set of the $|\mathcal{P}'|/2$ shortest paths in~$\mathcal{P}'$.
Hence, each path in~$\mathcal P''$ has length at most~$4 r' (c+1)$.   

We greedily construct a feasible solution $\mathcal{P}_0'$ by iteratively picking an arbitrary path~$P$ from~$\mathcal{P}''$ adding it to $\mathcal{P}_0'$ and removing all paths from $\mathcal{P}''$ that share some edge with~$P$ (including $P$ itself).  We stop when~$\mathcal{P}''$ is empty.  As $\mathcal{P}''$ has congestion~$c$, 
we remove at most~$4r'c(c+1)$ paths from~$\mathcal{P}''$ per iteration. Thus,~$|\mathcal{P}_0'|\geq|\mathcal{P''}|/(4r'c(c+1))=\Omega\left(|\mathcal{P}|/( c^{1.5} \sqrt{r}\right)$.

\medskip
\noindent
\textbf{Case 2:} At least half of the paths in $\mathcal P$ visit at least $r'$ nodes of the feedback vertex set $R$.
Let $\mathcal P'$ be the subset of these paths.
Consider each path in~$\mathcal{P}'$ as a flow of value $1/c$ and let~$f$ be the sum of all these flows. 
Note that $f$ provides a feasible solution to the \textsc{MaxEDP} LP relaxation for $({G}, M)$ of value at least~$|{\mathcal{P}}|/(2c)$. 
Note that each such flow path contributes~$1/c$ unit of flow to each of the $r'$ nodes in ${R}$ it visits.  
Since every flow path in~$f$ has length at least $r'$, the total inflow of the nodes in ${R}$ is at least $|f|r'$.  
By averaging, there must be a node $v\in {R}$ of inflow at least $r'|f|/r = |f|/r'$.
Let $f'$ be the subflow of $f$ consisting of all flow paths visiting $v$.
This subflow corresponds to a feasible solution $(f',\bf{x'})$ of the LP relaxation of value at least 
$|f|/r'\geq |\mathcal{P}|/(2cr')$. 
Using Proposition~\ref{lem:single-node-routing}, we can recover an integral feasible routing of size at least~$\frac{1}{12}\sum_{i}x_i'\geq |{\mathcal{P}}|/(24cr')
=\Omega\left(|\mathcal{P}|/( c^{1.5} \sqrt{r}\right)$. 

This completes the proof of~Theorem~\ref{thm:fvs}.
 \qed

\section{Fixed-Parameter Algorithm for MaxNDP}
\label{sec:exact-ndp}

We give a fixed-parameter algorithm for {\sc MaxNDP} with run time $(k+r)^{\mathcal O(r)}\cdot n$, where $r$ is the size of a minimum feedback vertex set in the given instance~$(G,\mathcal M)$.
A feedback vertex set $R$ of size $r$ can be computed in time $2^{O(r)}\cdot \mathcal O(n)$~\cite{LokshtanovEtAl2015}.
By the matching assumption, each terminal in~$\mathcal M$ is a leaf.  We can thus assume that none of the terminals is contained in~$R$.

Consider an optimal routing $\mathcal{P}$ of the given {\sc MaxNDP} instance. Let $\mathcal{M}_R\subseteq\mathcal{M}$ be the set of terminal pairs that are connected via $\mathcal{P}$ by a path that visits at least one node in~$R$.  Let $P\in\mathcal{P}$ be a path connecting a terminal pair $(s_i,t_i)\in\mathcal{M}_R$.  This path has the form $(s_i,\dots,r_1,\dots,r_2,\dots,r_\ell,\dots,t_i)$, where $r_1,\dots,r_\ell$ are the nodes in $R$ that are traversed by $P$ in this order.  The pairs $(s_i,r_1),(r_\ell,t_i)$ and~$(r_j,r_{j+1})$ with $j=1,\dots,\ell-1$ are called \emph{essential} pairs for $P$.  A node pair is called \emph{essential} if it is essential for some path in $\mathcal{P}$.  Let $\mathcal{M}_e$ be the set of essential pairs.

Let $F$ be the forest that arises when deleting $R$ from the input graph $G$.  Let~$(u,v)$ be an essential pair.  A $u$-$v$ path $P$ in $G$ is said to \emph{realize} $(u,v)$ if all internal nodes of $P$ lie in $F$.  A set $\mathcal{P}'$ of paths is said to \emph{realize}~$\mathcal{M}_e$ if every pair in $\mathcal{M}_e$ is realized by some path in $\mathcal{P}'$ and if two paths in $\mathcal{P}'$ can only intersect at their end nodes. Note that the optimal routing $\mathcal{P}$ induces a natural realization of~$\mathcal{M}_e$, by considering all maximal subpaths of paths in $\mathcal{P}$ whose internal nodes all lie in~$F$.  
Conversely, for any realization~$\mathcal{P}'$ of~$\mathcal{M}_e$, we can concatenate paths in~$\mathcal{P}'$ to obtain a feasible routing that connects all terminal pairs in~$\mathcal{M}_R$. Therefore, we consider $\mathcal{P}'$ (slightly abusing notation) also as a feasible routing for $\mathcal{M}_R$.

In our algorithm, we first guess the set~$\mathcal{M}_e$ (and thus $\mathcal{M}_R$). 
Then, by a dynamic program, we construct two sets of paths, $\mathcal{P}_e$ and $\mathcal{P}_F$ 
where $\mathcal{P}_e$ realizes~$\mathcal{M}_e$ and~$\mathcal{P}_F$ connects in~$F$ a subset of $\overline{\mathcal{M}}_R:=\mathcal{M}\setminus\mathcal{M}_R$. In our algorithm, the set $\mathcal{P}_e\cup\mathcal{P}_F$ forms a feasible routing that maximizes $|\mathcal{P}_F|$ and routes all pairs in~$\mathcal{M}_R$.  (Recall that we consider the realization $\mathcal{P}_e$ of~$\mathcal{M}_e$ as a feasible routing for~$\mathcal{M}_R$.)

Now assume that we know set $\mathcal{M}_e$. We will describe below a dynamic program that computes an optimum routing in time $2^{\mathcal O(r)}(k+r)^{\mathcal O(1)}n$.  For the sake of easier presentation, we only describe how to compute the cardinality of such a routing.

We make several technical assumptions that help to simplify the presentation.  First, we modify the input instance as follows. We subdivide every edge incident on a node in $R$ by introducing a single new node on this edge.  Note that this yields an instance equivalent to the input instance.  As a result, every neighbor of a node in $R$ that lies in $F$, that is, every node in $N_G(R)$, is a leaf in~$F$.  
Moreover, the set~$R$ is an independent set in $G$.  Also recall that we assumed that every terminal is a leaf. Therefore, we may assume that $R$ does not contain any terminal.  We also assume that forest $F$ is a rooted tree, by introducing a dummy node (which plays the role of the root) and arbitrarily connecting this node to every connected component of $F$ by an edge.  In our dynamic program, we will take care that no path visits this root node. We also assume that $F$ is an ordered tree by introducing an arbitrary order among the children of each node.

For any node $v$, let $F_v$ be the subtree of $F$ rooted at~$v$. 
Let $\ch{v} := \deg_F(v)-1$ be the number of children of~$v$ 
and let $v_1,\dots v_{\ch{v}}$ be the (ordered) children of~$v$.  
Then, for $i=1,\dots,\ch{v}$, let~$F_v^i$ denote the subtree of~$F_v$ induced by the union of~$v$ with the subtrees $F_{v_1},\dots,F_{v_i}$.
For leaves~$v$, we define $F_v^0$ as $F_v=v$. 

We introduce a dynamic programming table $T$.  
It contains an entry for every~$F_v^i$ and every subset~$\mathcal{M}_e'$ of~$\mathcal{M}_e$. 
Roughly speaking, the value of such an entry is the solution to the subproblem, where we restrict the forest to~$F_v^i$, and the set of essential pairs to~$\mathcal{M}_e'$.  
More precisely, table~$T$ contains five parameters. Parameters~$v$ and $i$ describing $F_v^i$, parameter~$\mathcal{M}_e'$, and two more parameters~$u$ and~$b$. 
Parameter 
$u$ is either a terminal,
or a node in $R$, 
and~$b$ is in one of the three states: $\free$, $\toBeUsed$, or $\blocked$. 
The value~$T[v,i,\mathcal{M}_e',u,b]$ is the maximum cardinality of a set $\mathcal{P}_F$ of paths with the following properties:
\begin{enumerate}
\item\vspace{-0.5em} $\mathcal{P}_F$ is a feasible routing of some subset of $\overline{\mathcal{M}}_R$.
\item $\mathcal{P}_F$ is completely contained in $F_v^i$. 
\item There is an additional set $\mathcal{P}_e$ of paths with the following properties:
\begin{enumerate}
\item $\mathcal{P}_e$ is completely contained in $F_v^i\cup R$ and node-disjoint from the paths in $\mathcal{P}_F$.
\item $\mathcal{P}_e$ is a realization of $\mathcal{M}_e'\cup\{(u,v)\}$ if $b=\toBeUsed$. 
Else, it is a realization of~$\mathcal{M}_e'$. 
\item There is no path in $\mathcal{P}_e\cup\mathcal{P}_F$ visiting $v$ if $b=\free$. 
\end{enumerate}
\end{enumerate}
If no such set $\mathcal{P}_F$ exists then $T[v,i,\mathcal{M}_e',u,b]$ is $-\infty$.  

Note that the parameter~$u$ is only relevant when $b=\toBeUsed$ (otherwise, it can just be ignored). 
Observe that~$T[v,i,\mathcal{M}_e',u,\blocked]\ge T[v,i,\mathcal{M}_e',u,\free]\ge T[v,i,\mathcal{M}_e',u,\toBeUsed]$.
Below, we describe how to compute the entries of $T$ in a bottom-up manner.

In the base case $v$ is a leaf. We set $T[v,0,\emptyset,u,\free]=0$. 
Then we set~$T[v,0,\mathcal{M}_e',u,\blocked]=0$ if $\mathcal{M}_e'$ is either empty, consists of a single pair of nodes in~$R\cap N_G(v)$, 
or consists of a single pair where one node is~$v$ and the other one is in~$R\cap N_G(v)$.
Finally, we set $T[v,0,\emptyset,u,\toBeUsed]=0$ if $u=v$ or $u$ is in~$R\cap N_G(v)$. 
For all other cases where $v$ is a leaf, we set $T[v,i,\mathcal{M}_e',u,b]=-\infty$.

For the inductive step, we consider the two cases $i=1$ and $i>1$. 
Let~$i=1$. It holds that~$T[v,1,\mathcal{M}_e',u,\toBeUsed]=T[v_1,\ch{v},\mathcal{M}_e',u,\toBeUsed]$ since the path in $\mathcal{P}_e$ realizing $(u,v)$ has to start at a leaf node of $F_{v_1}$.
It also holds that $T[v,1,\mathcal{M}_e',u,\blocked]$ and $T[v,1,\mathcal{M}_e',u,\free]$ are 
equal to~$T[v_1,\ch{v},\mathcal{M}_e',u,\blocked]$.

Now, let $i>1$. In a high level view, we guess which part of $\mathcal{M}_e'$ is realized in $F_{v}^{i-1}\cup R$ and which part is realized in $F_{v_i}\cup R$.
For this,
we consider every tuple~$(\mathcal{M}_{e1}',\mathcal{M}_{e2}')$ such that $\mathcal{M}_{e1}' \uplus \mathcal{M}_{e2}'$ is a partition of~$\mathcal{M}_e'$. By our dynamic programming table, we find a tuple that maximizes our objective.
In the following, we assume that we guessed~$(\mathcal{M}_{e1}',\mathcal{M}_{e2}')$ correctly. Let us consider the different cases of $b$ in more detail.

For $b=\free$, node $v$ is not allowed to be visited by any path, especially by any path in~$F_{v}^{i-1}\cup R$. Hence,~$T[v,i,\mathcal{M}_e',u,\free]$ is equal to \[T[v,i-1,\mathcal{M}_{e1}',u,\free] + T[v_i,\ch{v_i},\mathcal{M}_{e2}',u,\blocked]~.\] 

In the case of $b=\toBeUsed$, we have to realize $(u,v)$ in~$F_{v}^{i}\cup R$. For this, there are two possibilities: Either $(u,v)$ is realized by a path 
in~$F_{v}^{i-1}\cup R$, or there is a realizing path that first goes through~$F_{v_i}\cup R$ and then reaches $v$ via the edge $(v_i, v)$. 
Hence, for the first case, we consider \[T[v,i-1,\mathcal{M}_{e1}',u,\toBeUsed] + T[v_i,\ch{v_i},\mathcal{M}_{e2}',u,\blocked],\] 
for the second case, we consider \[T[v,i-1,\mathcal{M}_{e1}',u,\free] + T[v_i,\ch{v_i},\mathcal{M}_{e2}',u,\toBeUsed]~.\] 
Maximizing over both, 
we obtain~$T[v,i,\mathcal{M}_e',u,\toBeUsed]$. 

For the case of $b=\blocked$, we will consider two subcases.
In the first subcase, there is no path in~$\mathcal{P}_e\cup\mathcal{P}_F$ going through edge $(v_i,v)$, hence, we get
\[T[v,i-1,\mathcal{M}_{e1}',u,\blocked] + T[v_i,\ch{v_i},\mathcal{M}_{e2}',u,\blocked]~.\]
In the second subcase, there is a path~$P$ in $\mathcal{P}_e\cup\mathcal{P}_F$ going through edge $(v_i,v)$. 
Since $P$ is connecting two leafs in~$F_{v}^{i}$, a part of $P$ is in~$F_{v}^{i-1}\cup R$ and the other part is in~$F_{v_i}\cup R$.
If $P\in\mathcal{P}_e$, then it is realizing a pair of~$\mathcal{M}_{e}'$. Hence, for every pair~$(u_1,u_2)\in\mathcal{M}_{e}'$, we have to consider the term
\[T[v,i-1,\mathcal{M}_{e1}'-(u_1,u_2),u_1,\toBeUsed] + T[v_i,\ch{v_i},\mathcal{M}_{e2}'-(u_1,u_2),u_2,\toBeUsed]\]
and the symmetric term where we swap $u_1$ and $u_2$.
If $P\in\mathcal{P}_F$, then it is realizing a terminal pair of~$\overline{\mathcal{M}}_R$. Hence, for every pair~$(u_1,u_2)\in\overline{\mathcal{M}}_R$ we get the term 
\[1+ T[v,i-1,\mathcal{M}_{e1}',u_1,\toBeUsed] + T[v_i,\ch{v_i},\mathcal{M}_{e2}',u_2,\toBeUsed]\]
and the symmetric term where we swap $u_1$ and $u_2$. Note that we count the path realizing~$(u_1,u_2)$ in our objective. 
Maximizing over all the terms of the two subcases, we obtain~$T[v,i,\mathcal{M}_e',u,\toBeUsed]$. 

Let us analyze the run time of algorithm described in Sect.~\ref{sec:exact-ndp}.
In order to guess $\mathcal{M}_e$, we enumerate all potential sets of essential pairs. 
There are at most $(2k+r+1)^{2r}$ candidate sets to consider, since each pair contains a node in~$R$, and each node in $R$ is paired with at most two other nodes each of which is either a terminal or another node in~$R$.  
For each particular guess $\mathcal{M}_e$, we run the above dynamic program. 
The number of entries in $T$---as specified by the five parameters $v$, $i$, $\mathcal{M}_e', u$ and $b$---for each fixed $\mathcal{M}_e$ is at most~$(\sum_{v\in V(F)}\deg_F(v))\times 2^{2r}\times(2k+r)\times 3$.  
For the computation of each such entry, we consider all combinations of at most~$2^{2r}$ partitions of $\mathcal{M}_e'$ with either at most $r$ essential pairs in~$\mathcal{M}_e'$, or with at most~$k$ terminal pairs in~$\overline{\mathcal{M}}_R$.  Altogether, this gives a run time of~$(8k+8r)^{2r+2}\cdot \mathcal O(n)$.
This finishes the proof of Theorem~\ref{thm:maxndp-fpt-kr}.

\section{Parameterized Intractability of {\sc MaxNDP} for Parameter $r$}
\label{sec:ndp-w1hard-r}
In this section we show that {\sc MaxNDP} is $\mathsf{W}[1]$-hard parameterized by the size~$r$ of a feedback vertex set.
This reduction was originally devised for parameter treedepth, by Ene et al.~\cite{EneEtAl2016}; here we notice that the same reduction also works for parameter $r$.
(Both treedepth and feedback vertex set number are restrictions of treewidth, but they are incomparable to each other.)

For sake of completeness, we include the reduction here, and argue about the feedback vertex set number of the reduced graph.
The reduction is from the~$W[1]$-hard \textsc{Multicolored Clique} problem~\cite{FellowsEtAl2009}, where given a graph
$G$, an integer~$k$, and a partition $V = V^1 \uplus V^2 \uplus \ldots \uplus V^k$,
we are to check if there exists $k$-clique in $G$ with exactly one
vertex in every set~$V^i$. By adding dummy vertices, we can assume that $|V^i| = n$ for every $i=1,\hdots,k$, and that $n, k \geq 2$.

\medskip
\noindent
\textbf{Construction.}
Given an instance $(G,k,(V^i)_{i=1}^k)$ of \textsc{Multicolored Clique}, we aim at constructing an equivalent instance $(H,\mathcal M,\ell)$ of {\sc MaxNDP}.

We start with a construction, for every set $V^i$, a gadget $W^i$ as follows.
First, for every $v \in V^i$ we construct a $(k-1)$-vertex path $X_v^i$ on vertices~$x_{v,1}^i, x_{v,2}^i, \ldots,  x_{v,i-1}^i, x_{v,i+1}^i, \ldots, x_{v,k}^i$.
Second, we select an arbitrary vertex~$u^i \in V_i$.
Third, for every $v \in V^i \setminus \{u^i\}$, we add a vertex~$s^i_v$ adjacent to the first vertex of~$X_v^i$ (i.e., $x_{v,1}^i$ and $x_{u^i,1}^i$ if $i > 1$ or $x_{v,2}^i$ and $x_{u^1,2}^i$ if $i=1$),
  a vertex~$t^i_v$ adjacent to the last vertex of~$X_v^i$ (i.e., $x_{v,k}^i$ and $x_{u^i,k}^i$ if $i < k$ or~$x_{v,k-1}^i$ and $x_{u^i,k-1}^i$ if $i=k$), and make $(s^i_v,t^i_v)$ a terminal pair.
This concludes the description of the gadget $W^i$. By $\mathcal M_{st}$ we denote the set of terminal pairs constructed in this step.

To encode adjacencies in $G$, we proceed as follows. For every pair $1 \leq i < j \leq k$, we add a vertex $p_{i,j}$, adjacent
to all vertices $x_{v,j}^i$ for $v \in V_i$ and all vertices $x_{u,i}^j$ for $u \in V_j$. For every edge $vu \in E(G)$ with $v \in V_i$ and $u \in V_j$,
we add a terminal pair $(x_{v,j}^i, x_{u,i}^j)$. Let $\mathcal M_x$ be the set of terminal pairs constructed in this step; we have $\mathcal M = \mathcal M_{st} \cup \mathcal M_x$.

Finally, we set the required number of paths $\ell := k(n-1) + \binom{k}{2}$. This concludes the description of the instance $(H,\mathcal M,\ell)$.

\medskip
\noindent
\textbf{From a clique to disjoint paths.}
Assume that the input \textsc{Multicolored Clique} instance is a ``yes''-instance, and let $\{v^i~|~i = 1,\hdots,k\}$ be a clique in $G$ with $v^i \in V^i$ for $i = 1,\hdots,k$.
We construct a family of~$\ell$ vertex-disjoint paths as follows. First, for $i = 1,\hdots,k$ and every $v \in V^i \setminus \{u^i\}$, we route a path from~$s^i_v$ to $t^i_v$
through the path $X_v^i$ if $v \neq v^i$, and through the path $X_{u^i}^i$ if $v = v^i$. Note that in this step we have created~$k(n-1)$ vertex-disjoint paths
connecting terminal pairs, and in every gadget $W^i$ the only unused vertices are vertices on the path~$X_{v^i}^i$.
To construct the remaining $\binom{k}{2}$ paths, for every pair $1 \leq i < j \leq k$ we take the $3$-vertex path from $x_{v^i,j}^i$ to $x_{v^j,i}^j$ through $p_{i,j}$;
note that the assumption that $v^iv^j \in E(G)$ ensures that $(x_{v^i,j}^i, x_{v^j,i}^j)$ is indeed a terminal pair in $\mathcal M$.

\medskip
\noindent
\textbf{From disjoint paths to a clique.}
In the other direction, let $\mathcal P$ be a family of $\ell$ vertex-disjoint paths connecting terminal pairs in $H$. 
Let $\mathcal P_{st} \subseteq \mathcal P$ be the set of paths connecting terminal pairs from $\mathcal M_{st}$, and similarly define $\mathcal P_x$.
First, observe that the set $P = \{p_{i,j}~|~1 \leq i < j \leq k\}$ separates every terminal pair from~$\mathcal M_x$. Hence, every path from $\mathcal P_x$ contains at least one vertex from $P$.
Since $|P| = \binom{k}{2}$, we have~$|\mathcal M_x| \leq \binom{k}{2}$, and, consequently,~$|\mathcal P_{st}| \geq \ell - \binom{k}{2} = k(n-1) = |\mathcal M_{st}|$. 
We infer that $\mathcal P_{st}$ routes all terminal pairs in $\mathcal M_{st}$ without using any vertex of $P$, while $\mathcal P_x$ routes $\binom{k}{2}$ pairs from $\mathcal P_x$, and every path from $\mathcal P_x$ contains exactly one vertex from $P$.

Since the paths in $\mathcal P_{st}$ cannot use any vertex in $P$, every such path needs to be contained inside one gadget~$W^i$. Furthermore, observe that a shortest path between terminals $s_{v,a}^i$ and $t_{v,a}^i$ inside $W^i$ is either~$X_{u^i}^i$ or~$X_v^i$,
      prolonged with the terminals at endpoints, and thus contains $k+1$ vertices.
Furthermore, a shortest path between two terminals in $\mathcal M_x$ contains three vertices. We infer that the total number of vertices on paths in $\mathcal P$ is at least
\begin{align*}
|\mathcal P_{st}| \cdot (k+1) + |\mathcal P_x| \cdot 3 &= k(n-1)(k+1) + 3\binom{k}{2}\\
&= k\left(n(k-1) + 2(n-1)\right) + \binom{k}{2} = |V(H)| \enspace .
\end{align*}
We infer that every path in $\mathcal P_{st}$ consists of $k+1$ vertices, and every path in $\mathcal P_x$ consists of three vertices.
In particular, for $i = 1,\hdots,k$ and $v \in V^i \setminus \{u^i\}$, the path in $\mathcal P_{st}$ that connects $s_v^i$ and $t_v^i$ goes either through~$X_v^i$ or~$X_{u^i}^i$.
Consequently, for $i = 1,\hdots, k$ there exists a vertex $v^i \in V^i$ such that the vertices of~$W^i$ that do not lie on any path from $\mathcal P_{st}$ are exactly the vertices on the path $X_{v^i}^i$. 

We claim that $\{v^i~|~i =1,\hdots,k\}$ is a clique in $G$. To this end, consider a pair $1 \leq i < j \leq k$.
Since~$|\mathcal P_x| = \binom{k}{2}$, there exists a path in $\mathcal P_x$ that goes through~$p_{i,j}$. Moreover, this path has exactly three vertices. Since the only neighbours of $p_{i,j}$ that are not used
by paths from $\mathcal P_{st}$ are $x_{v^i,j}^i$ and $x_{v^j,i}^j$, we infer that $(x_{v^i,j}^i, x_{v^j,i}^j) \in \mathcal M$ and, consequently, $v^iv^j \in E(G)$. This concludes the proof of the correctness of the construction.

\medskip
\noindent
\textbf{Bounding the feedback vertex set number.}
We are left with a proof that $H$ has bounded feedback vertex set number.

To this end, first observe that $H-P$ contains $k$ connected components, being the gadgets~$W^i$. Second, observe that the deletion of the endpoints of the path~$X_{u^i}^i$ from the gadget $W^i$ breaks $W^i$ into connected components being paths on at most $k+1$ vertices.
Consequently,~$H$ has a feedback vertex set~$R$ consisting of~$P$ and $\{x_{u_i,1}^i,x_{u_i,k}^i\in V(W^i)~|~i=1,\hdots,k\}$, of size $|R| = \mathcal O(k^2)$.
This finishes the proof of Theorem~\ref{thm:ndp-w1hard-r}. \qed

\section{Hardness of Edge-Disjoint Paths in Almost-Forests}
\label{sec:hardnessofedpinalmostforests}
In this section we show that {\sc EDP} (and hence {\sc MaxEDP}) is $\mathsf{NP}$-hard already in graphs that are almost forests, namely, in graphs that are forests after deleting two nodes.
That is, we prove Theorem~\ref{thm:edp-nphard-r2}.

\begin{proof}[Proof of Theorem~\ref{thm:edp-nphard-r2}]
  We first show $\mathsf {NP}$-hardness of EDP for $r=2$.
  We reduce from the problem {\sc Edge 3-Coloring} in cubic graphs, which is $\mathsf{NP}$-hard~\cite{Holyer1981}.
  Given a cubic graph~$H$, we construct a complete bipartite graph $G$, where one of the two partite classes of $V(G)$ consists of three nodes $\{v_1,v_2,v_3\}$, and the other partite class consists of $V(H)$.
  As terminal pairs, we create the set~${\mathcal M = \{(s,t)~|~\{s,t\}\in E(H)\}}$; in words, we want to connect a pair of nodes by a path in $G$ if and only if they are connected by an edge in~$H$.
  This completes the construction of the instance $(G,\mathcal M)$ of {\sc MaxEDP}.
  Notice that $G$ has a feedback vertex set of size $r = 2$, since removing any size-2 subset of $\{v_1,v_2,v_3\}$ from $G$ yields a forest.
  
  Regarding correctness of the reduction, we show that $H$ is 3-edge-colorable if and only if \emph{all} pairs in~$\mathcal M$ can be routed in $G$.
  
  In the forward direction, suppose that $H$ is 3-edge-colorable.
  Let~${\varphi:E(H)\rightarrow\{1,2,3\}}$ be a proper 3-edge-coloring of $H$.
  For $c = 1,2,3$, let~${E_c\subseteq E(H)}$ be the set of edges that receive color~$c$ under $\varphi$.
  Then there is a routing in $G$ that routes all terminal pairs $\{(s,t)\in\mathcal M~|~\{s,t\}\in E_c\}$ exclusively via the node~$v_c$ (and thus via paths of length~2).
  Notice that this routing indeed yields edge-disjoint paths, for if there are distinct vertices~$s,t_1,t_2\in V(H)$ and edges $e_1 = \{s,t_1\},e_2 = \{s,t_2\}\in E(H)$, then $e_1,e_2$ receive distinct colors under~$\varphi$ (as $\varphi$ is proper), and so the two terminal pairs $\{s,t_1\},\{s,t_2\}$ are routed via distinct nodes~$c_1,c_2\in\{v_1,v_2,v_3\}$, and thus also via edge-disjoint paths.
  
  In the backward direction, suppose that all terminal pairs in $\mathcal M$ can be routed in~$G$.  Since~$H$ is cubic, any node $s\in V(H)$ is contained in three terminal pairs.  Therefore, no path of the routing can have a node in~$V(H)$ as an internal node and thus all paths in the routing have length~2.
  Then this routing naturally corresponds to a proper 3-edge-coloring $\varphi$ of~$H$, where any terminal pair $\{s,t\}$ routed via $c$ means that we color the edge $\{s,t\}\in E(H)$ with color ~$c$ under $\varphi$.

  \medskip
In order two show $\mathsf{NP}$-hardness of \textsc{MaxEDP} for $r=1$, we also reduce from \textsc{Edge 3-Coloring} in cubic graphs and perform a similar construction as described above: This time, we construct a bipartite graph $G$ with one subset of the partition being~$\{v_1,v_2\}$, the other being $V(H)$, and the set $\mathcal{M}$ of terminal pairs being again specified by the edges of $H$.  This completes the reduction. The resulting graph $G$ has a feedback vertex set of size $r=1$.

We claim that $H$ is 3-colorable if and only if we can route $n=|V(H)|$ pairs in $G$.  

In the forward direction, suppose that $H$ is 3-edge-colorable.  
Let~${\varphi:E(H)\rightarrow\{1,2,3\}}$ be a proper 3-edge-coloring of $H$.
For $c = 1,2,3$, let~${E_c\subseteq E(H)}$ be the set of edges that
receive color $c$ under $\varphi$.  Then there is a routing in $G$
that routes all f $\{(s,t)\in\mathcal M~|~\{s,t\}\in E_c\}$
exclusively via the node $v_c$ (and thus via paths of length~2) for
the colors $c=1,2$. (The terminals corresponding to edges receiving
color 3 remain unrouted.)

The reasoning that the resulting routing is feasible is analogous to the case of~$r=2$.  Since for each of the~$n$ terminals exactly two of the three terminal pairs are routed, this means that precisely $n$ terminal pairs are routed overall.

In the backward direction, suppose that $n$ terminal pairs in $\mathcal M$ can be routed in~$G$.  
Since any terminal~$v$ in $G$ is a node in $V(H)$ has therefore has degree two in $G$, this means that at most two paths can be routed for $v$.  As~$n$ terminal pairs are realized, this also means that \emph{exactly} two paths are routed for each terminal.  Hence, none of the paths in the routing has length more than two. Otherwise, it would contain an internal node in~$V(H)$, which then could not be part of two other paths in the routing. 
Then this routing naturally corresponds to a partial edge-coloring of~$H$, where any terminal pair $\{s,t\}$ routed via $c$ means that we color the edge $\{s,t\}\in E(H)$ with color $c$.  Since each terminal $v$ in $V(H)$ is involved in exactly two paths in the routing, exactly one terminal pair for $v$ remains unrouted. Hence, exactly one edge incident on~$v$ in $H$ remains uncolored in the partial coloring.  We color all uncolored edges in~$H$ by color 3 to obtain a proper~3-coloring.
\end{proof}

Thus, we almost close the complexity gap for {\sc EDP} with respect to the size of a minimum feedback vertex set, only leaving the complexity of the case $r = 1$ open.
We conjecture that this case can be solved in polynomial time.

\bibliography{routing-treewidth}
\bibliographystyle{abbrv}

\end{document}